\tikzstyle{overbrace text style}=[font=\tiny, above, pos=.5, yshift=5pt]
\tikzstyle{overbrace style}=[decorate,decoration={brace,raise=5pt,amplitude=3pt}]
\pgfplotsset{compat=1.16}
\newtheorem{theorem}{Theorem}[section]
\newtheorem{lemma}[theorem]{Lemma}
\theoremstyle{definition}
\newtheorem*{comment*}{Comment}
\newcommand{\cost}{\text{cost}}
\newcommand{\AoM}{\text{Avg-of-Max}}
\newcommand{\MoA}{\text{Max-of-Avg}}
\newcommand{\favorite}{\text{top}}
\DeclarePairedDelimiter\abs{\lvert}{\rvert}
\newcommand{\OPT}{o}
\DeclareMathOperator*{\argmax}{arg\,max}
\newcommand{\info}{\text{info}}
\title{\bf Metric Distortion under Group-Fair Objectives}
\author[1]{Georgios Amanatidis}
\author[2]{Elliot Anshelevich}
\author[2]{\\Christopher Jerrett}
\author[1]{Alexandros A. Voudouris}
\affil[1]{University of Essex}
\affil[2]{Rensselaer Polytechnic Institute}
\date{}
\begin{document}

\allowdisplaybreaks

\maketitle

\begin{abstract}
We consider a voting problem in which a set of agents have metric preferences over a set of alternatives, and are also partitioned into disjoint groups. Given information about the preferences of the agents and their groups, our goal is to decide an alternative to approximately minimize an objective function that takes the groups of agents into account. We consider two natural group-fair objectives known as {\em Max-of-Avg} and {\em Avg-of-Max} which are different combinations of the max and the average cost in and out of the groups. We show tight bounds on the best possible {\em distortion} that can be achieved by various classes of mechanisms depending on the amount of information they have access to. In particular, we consider {\em group-oblivious full-information} mechanisms that do not know the groups but have access to the exact distances between agents and alternatives in the metric space, {\em group-oblivious ordinal-information} mechanisms that again do not know the groups but are given the ordinal preferences of the agents, and {\em group-aware} mechanisms that have full knowledge of the structure of the agent groups and also ordinal information about the metric space. 
\end{abstract}

\section{Introduction} \label{sec:intro}

One of the main subjects of study in (computational) social choice theory is to identify the capabilities and limitations of making appropriate collective decisions when given the preferences of individuals (or, {\em agents}) over alternative outcomes. This is done either by an axiomatic analysis of the potential decision-making mechanisms (which are also referred to as voting rules) \citep{comsoc-book}, or a qualitative analysis that aims to quantify the possible loss of efficiency when the agents have private cardinal utilities or costs for the alternatives but are only able to communicate partial information about their preferences, for example using ordinal information. This inefficiency is quantified by the notion of {\em distortion} which compares the quality of the computed outcome to that of the ideal outcome that could have been computed if full information about the underlying utilities of the agents was available. Since its introduction more than 15 year ago, distortion has been studied for many different social choice problems (such as voting applications, resource allocation, and facility location) and under different restrictions about the cardinal preferences of the agents (such as assuming unit-sum utilities or metric costs). For a more detailed overview see our discussion of the related work below and the survey of \cite{distortion-survey}.

With few exceptions, the distortion literature has focused on voting settings in which the agents are assumed to be independent of each other. As such, the quality of the different outcomes is typically calculated using measures such as the {\em social welfare} (defined as the total or average utility of all agents) or the {\em egalitarian welfare} (defined as the minimum utility over all agents). However, there are social choice applications in which, while the agents can act autonomously, they are also part of larger groups and care about the overall welfare of the members of their groups, but not that much about other groups. As a toy example, consider the case of a university department, the academics of which are members of different research groups. For several matters, such as electing the head of the department, each academic participates individually in the voting process, but the outcome might affect different groups in different ways. Due to this, objectives such as the social and the egalitarian welfare do not fully capture the quality of an outcome according to the structure of the problem. Instead, we would like objectives that take into account the partition of the agents into groups to measure efficiency and also satisfy other desired properties such as fairness or some form of balance among different groups. 

Two such natural objectives were first introduced by \citet{AFV22} who studied a metric district-based single-winner voting setting, where the agents have {\em costs} for the alternatives that are determined by their {\em distances} in a metric space, and the agents are furthermore partitioned into groups that are called {\em districts}. The first objective is {\em Max-of-Avg}, defined as the maximum over all groups of the average total distance of the agents within each group from the chosen alternative, and the second one is {\em Avg-of-Max}, defined as the average over all groups of the maximum distance among any agent within each group from the chosen alternative.\footnote{Observe that both of these objectives are essentially combinations of the {\em social cost} and the {\em egalitarian cost}, which are the analogues of the social welfare and egalitarian welfare when the agents have costs for the alternatives rather than utilities.}
By their definition, to optimize them, we need to choose outcomes that strike a balance between the average or maximum cost of any group as a whole, thus achieving fairness among different groups, on top of absolute efficiency.  

\subsection{Our Contribution}
We study a single-winner voting setting with $n$ agents and $m$ alternatives that lie in a metric space. Furthermore, the agents are partitioned into $k$ disjoint groups. Given some information about the groups of agents, as well as the distances between agents and alternatives in the metric space, our goal is to choose an alternative as the winner that is (approximately) efficient with respect to the Max-of-Avg and Avg-of-Max objectives that were defined above. In particular, we show tight bounds on the distortion of different classes of {\em deterministic} mechanisms, depending on the type of information they have access to in order to decide the winner. 

We start by considering {\em group-oblivious} mechanisms which have no knowledge of the groups of agents.
In Section~\ref{sec:full}, we consider the class of {\em full-information} group-oblivious mechanisms which have complete information about the distances between agents and alternatives in the metric space. For such mechanisms, we show a tight bound of $3$ for Max-of-Avg, a tight bound of $3$ for Avg-of-Max on instances in which the groups are symmetric (i.e., all groups have the same size), and a tight bound of $k$ for Avg-of-Max on general instances. 
In Section~\ref{sec:ordinal}, we turn our attention to {\em ordinal-information} group-oblivious mechanisms which are given as input the ordinal preferences of the agents over the alternatives in the form of rankings from the smallest distance to the largest. We show a tight bound of $5$ for Max-of-Avg, a tight bound of $5$ for Avg-of-Max on instances with symmetric groups, and a tight bound of $2k+1$ for Avg-of-Max on general instances. An overview of our results for group-oblivious mechanisms is given in Table~\ref{tab:overview}.

\renewcommand{\arraystretch}{1.3}
\begin{table}[t]
\centering
\begin{tabular}{c|c|c|c}
                            &           & Full-information & Ordinal-information \\ \hline 
Max-of-Avg                  &           & $3$ (Theorems~\ref{thm:full:max-of-avg:lower}, \ref{thm:full:max-of-avg:upper:general}) & $5$ (Theorems~\ref{thm:ordinal:max-of-avg:lower}, \ref{thm:ordinal:max-of-avg:upper}) \\ \hline
\multirow{2}{*}{Avg-of-Max} & Symmetric & $3$ (Theorems~\ref{thm:full:avg-of-max:lower:symmetric}, \ref{thm:full:avg-of-max:upper:symmetric}) & $5$ (Theorems~\ref{thm:ordinal:avg-of-max:lower:symmetric}, \ref{thm:ordinal:avg-of-max:upper:symmetric})\\
                            & Asymmetric & $k$ (Theorems~\ref{thm:full:avg-of-max:lower}, \ref{thm:full:avg-of-max:upper}) & $2k+1$ (Theorems~\ref{thm:ordinal:avg-of-max:lower}, \ref{thm:ordinal:avg-of-max:upper})\\  
\hline
\end{tabular}
\caption{An overview of our tight distortion bounds for the class of group-oblivious mechanisms.}
\label{tab:overview}
\end{table}

In Section~\ref{sec:aware} we turn our attention to {\em group-aware} mechanisms which know the structure of the groups of agents. Having full information about the metric space on top of this knowledge about the groups makes the problem of optimizing the Max-of-Avg and the Avg-of-Max objectives trivial by simply calculating the cost of every alternative. Consequently, we consider group-aware mechanisms that have access to the ordinal preferences of the agents over the alternatives instead. For instances with two alternatives, we prove a tight bound of $3$ on the distortion of such mechanisms for both objectives. For general instances, we show that the distortion is still $3$ when we are allowed to exploit more information about the metric space for the upper bound. In particular, we assume access to the distances between the alternatives. Resolving the distortion of group-aware mechanisms is probably the most challenging open question that our work leaves open; we discuss this in Section~\ref{sec:conclusion}.

\subsection{Related Work}
Inspired by worst-case analysis, \citet{procaccia2006distortion} introduced the distortion as a means of quantifying the inefficiency of voting mechanisms that base their decisions on the ordinal preferences of the agents over the alternative outcomes. Without restricting the possible underlying cardinal utilities of the agents, the distortion can be shown to be unbounded for most natural mechanisms. This led to subsequent works to study voting settings where it is assumed that the agents have underlying normalized utilities \citep{boutilier2015optimal,caragiannis2017subset,ebadian2022optimized,ebadian2023explainable}, or costs determined by distances in an unknown metric space~\citep{anshelevich2018approximating,gkatzelis2020resolving,kempe2022veto,charikar2022randomized,charikar24breaking,CSV22,jaworski2020committees}, or combinations of the two~\citep{gkatzelis2023both}. The distortion has also been studied for other social choice problems, such as participatory budgeting~\citep{benade2021participatory}, matching~\citep{filos2014RP,amanatidis2022matching}, as well as clustering~\citep{anshelevich2016blind,burkhardt2024low} and other graph problems where only ordinal information is available~\citep{abramowitz2017utilitarians}.
We refer to the survey of \citet{distortion-survey} for a more detailed exposition of the distortion framework and the problems it has been applied to. 

While the bulk of the distortion literature has focused on settings where ordinal or even less than ordinal information is available about the preferences of the agents, there has been recent interest in settings where it is also possible to elicit some cardinal information.
For example, the agents might be able to communicate a number of bits about their preferences~\citep{mandal2019efficient,mandal2020optimal,kempe2020communication}, or answer value queries related to their utilities about the alternatives~\citep{amanatidis2021peeking,amanatidis2022matching,Amanatidis2024dice,ma2021matching,caragiannis2023impartial,burkhardt2024low}, or provide more information in the form of intensities~\citep{abramowitz2019passion,kahng2023intesities} or threshold approvals~\citep{bhaskar2018truthful,benade2021participatory,ebadian2023approval,anshelevich2024approvals,latifian2024approval}. 
In our work, we also consider more than ordinal information in the case of full-information group-oblivious mechanisms, where the main source of inefficiency comes from not knowing the structure of the groups of agents. 

As already previously mentioned, the particular objective functions (Max-of-Avg and Avg-of-Max) that we consider in this paper have been studied in the context of distortion by \citet{AFV22} for single-winner distributed metric voting, and subsequently by \citet{voudouris2023tight} for the same setting, and by \citet{filos2024distributedFL} for distributed facility location on the line. In those settings, similarly to our model here, the agents are partitioned into groups that are called districts, and a mechanism works in two steps: First, for each district, it decides a representative alternative or location based on given information about the preferences of the agents in the district, and then it decides a winner or a facility location  based on information about the district representatives. Such distributed mechanisms can be thought of as members of the class of group-aware mechanisms in our setting when the groups are assumed to be known. However, they are very restricted as they essentially forget any detailed in-group information in the second step and instead rely only on the group representatives to make final decisions. The Max-of-Avg and Avg-of-Max objectives have also been considered in the context of mechanism design without money for altruistic facility location problems by  \citet{zhou2022sp-group,zhou2024altruism}.

\section{Preliminaries} \label{sec:prelim}
An instance $I$ of our voting problem consists of a set $N$ of $n \geq 2$ {\em agents} and a set $A$ of $m \geq 2$ {\em alternatives}. 
Agents and alternatives are represented by points in a metric space. 
We denote by $d(x,y)$ the {\em distance} between any two points $x$ and $y$ in the metric space; the distance function satisfies the properties $d(x,x) = 0$, $d(x,y) = d(y,x)$, and the triangle inequality $d(x,y) \leq d(x,z) + d(z,y)$ for any $x, y, z \in N \cup A$. 
The agents are partitioned into $k \geq 2$ pairwise disjoint {\em groups} which may be known or unknown; Let $G \vcentcolon= \{g_1, \ldots, g_k\}$ be the set of groups, and denote by $n_g$ the {\em size} of any group $g \in G$. If the groups are {\em symmetric}, to simplify our notation we write $n_g = \lambda = n/k$. 

A {\em mechanism} $M$ takes as input some information $\info(I)$ related to the groups of agents and the distances between agents and alternatives in the metric space. Based on this information, it outputs one of the alternatives as the {\em winner}, denoted by $M(\info(I))$. When the groups are assumed to be unknown (Sections~\ref{sec:full} and~\ref{sec:ordinal}), we consider two different classes of {\em group-oblivious} mechanisms depending on the type of information related to the metric space they have access to:
\begin{itemize}
    \item {\em Full-information} group-oblivious mechanisms have complete knowledge of the metric space, that is, they have access to the distances between all agents and alternatives.
    \item {\em Ordinal-information} group-oblivious mechanisms have access to the ordinal preferences of the agents over the alternatives according to their distances; that is, if $d(i,x) < d(i,y)$ for an agent $i$ and alternatives $x$ and $y$, then $i$ ranks $x$ higher $y$. 
\end{itemize} 
When the groups are assumed to be known (Section~\ref{sec:aware}), we consider {\em group-aware} mechanisms that have access to the ordinal preferences of the agents and---potentially---information related to the distances between alternatives (but not between agents, or between agents and alternatives). 

We are interested in designing socially efficient mechanisms according to collective cost objective functions that take the groups of the agents into account. In particular, we focus on the following two objectives: 
\begin{itemize}
    \item The {\em Max-of-Avg} cost of an alternative $x$ in a given instance $I$ is the maximum over all groups of the average total distance of the agents within each group from $x$, that is, 
    \begin{align*}
    \MoA(x\,|\,I) = \max_{g\in G} \bigg( \frac{1}{n_g} \sum_{i \in g} d(i,x) \bigg).
    \end{align*}
    \item The {\em Avg-of-Max} cost of an alternative $x$ in a given instance $I$ is the average over all groups of the maximum distance of any agent within each group from $x$, that is, 
    \begin{align*}
    \AoM(x\,|\,I) = \frac{1}{k}\sum_{g\in G} \max_{i \in g} d(i,x). 
    \end{align*}
\end{itemize}
Whenever the cost objective is clear from context, we will simplify our notation and write $\cost(x\,|\,I)$ for the cost of alternative $x$ in a given instance $I$. We will simplify our notation even more and write $\cost(x)$ when the instance is also clear from context.

Since the mechanisms we consider only have partial information about the groups of agents or the metric space, they cannot always identify the alternatives that optimize cost objectives which depend on the structure of the groups, like Max-of-Avg and Avg-of-Max. The loss of efficiency of a mechanism $M$ is captured by its {\em distortion}, which is the worst-case ratio (over all possible instances) of the cost of the alternative chosen by $M$ over the minimum possible cost of \textit{any} alternative, that is
\begin{align*}
    \sup_{I} \frac{\cost(M(\info(I))\,|\,I)}{\min_x \cost(x\,|\,I)}.
\end{align*}
Observe that the distortion of any mechanism is always at least $1$; we define $0/0=1$ for consistency. We aim to reveal the best possible distortion that can be achieved by mechanisms in this group voting setting. 

\section{Full-Information Group-Oblivious Mechanisms}\label{sec:full}
We start the presentation of our technical results with the class of full-information group-oblivious mechanisms; recall that such mechanisms have complete access to the distances between all agents and alternatives, which means that their inefficiency is solely due to being oblivious to how the agents are partitioned into groups. For the Max-of-Avg objective, we show a tight bound of $3$ on the distortion of full-information mechanisms (Section~\ref{sec:full:max-of-avg}). For the Avg-of-Max objective, we first show a tight bound of $3$ for instances in which the groups are symmetric, and a tight bound of $k$ for general instances with asymmetric groups (Section~\ref{sec:full:avg-of-max}). 

\subsection{Max-of-Avg} \label{sec:full:max-of-avg}

We begin by showing a lower bound of $3$ on the distortion of full-information group-oblivious mechanisms for the Max-of-Avg objective using an instance with symmetric groups. 

\begin{theorem} \label{thm:full:max-of-avg:lower}
For Max-of-Avg, the distortion of any full-information group-oblivious mechanism is at least $3-\varepsilon$ for any $\varepsilon > 0$, even when there are only two alternatives and the groups are symmetric.
\end{theorem}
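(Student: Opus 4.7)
The plan is to construct a single distance profile whose multisets of agent-to-alternative distances to $a$ and to $b$ coincide---so that any deterministic full-information mechanism outputs the same alternative on every grouping consistent with these positions---and then to exhibit two groupings (mirror images of each other) under which, whichever alternative the mechanism outputs, the induced $\MoA$ is close to $3$ times that of the other alternative.

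Concretely, on the line with $d(a,b)=1$, I would place $\lambda+1$ agents at a point $p_1$ past $b$ with $d(p_1,a)=3/2$ and $d(p_1,b)=1/2$, another $\lambda+1$ agents at the mirrored point $p_2$ past $a$, and $\lambda^2-2$ agents at the midpoint $m$ of $a$ and $b$. This yields $n=\lambda(\lambda+2)$ agents, which exactly fits $k=\lambda+2$ symmetric groups of size $\lambda$; by construction the multiset of distances to $a$ equals the multiset of distances to $b$. Assuming without loss of generality that the mechanism outputs $a$, I would consider the grouping $g_a$ consisting of (i) one ``bad'' group of $\lambda$ of the $p_1$-agents (whose internal averages are $3/2$ to $a$ and $1/2$ to $b$); (ii) one group containing the remaining $p_1$-agent, a single $p_2$-agent, and $\lambda-2$ midpoint agents; and (iii) one group per remaining slot, each holding a single $p_2$-agent padded out by $\lambda-1$ midpoint agents. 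A direct computation then gives $\MoA(a)=3/2$ from the bad group and $\MoA(b)=1/2+1/\lambda$ from the dilution groups, so the distortion equals $\frac{3\lambda}{\lambda+2}$, which exceeds $3-\varepsilon$ once $\lambda$ is large enough. If the mechanism outputs $b$ instead, the mirror grouping $g_b$ obtained by swapping the roles of $p_1$ and $p_2$ gives the same bound.

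The main obstacle is reconciling symmetry of the distance profile (needed to force the mechanism to output ``blindly'') with asymmetry of $\MoA$ under some grouping. The key design choice is letting $k$ scale with $\lambda$: the $\lambda+1$ mirror-side outliers---which are far from $b$ and would otherwise concentrate into a single group and spike $\MoA(b)$ up to $3/2$---must be split one per group across $\lambda+1$ different groups and diluted by $\lambda-1$ midpoint agents each, in order to keep $\MoA(b)$ down near $1/2$; this dilution is precisely what the large bulk of $\lambda^2-2$ midpoint ``filler'' agents makes possible. With too few groups the mirror outliers would necessarily cluster, pushing $\MoA(b)$ back up to $3/2$ and collapsing the distortion to $1$.
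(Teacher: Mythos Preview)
Your construction is correct and follows essentially the same approach as the paper: a symmetric three-point line instance with outliers on both sides of the two alternatives and a bulk at the midpoint, where the far-from-$a$ outliers are concentrated into one group while the far-from-$b$ outliers are diluted one-per-group with midpoint agents, yielding the same ratio $\tfrac{3\lambda}{\lambda+2}$. The paper's version is marginally cleaner---it uses exactly $\lambda$ outliers on each side and $k=\lambda+1$ groups, so no extra ``mixed'' group is needed---and note that the relevant symmetry is that the \emph{multiset of distance pairs} $(d(i,a),d(i,b))$ is invariant under swapping $a$ and $b$, not merely that the two marginal multisets agree (your instance does satisfy this stronger property).
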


\begin{proof}
Let $\varepsilon>0$ be any constant and $\lambda\in \mathbb{N}$ be such that $\lambda > \frac{6}{\varepsilon} - 2$. 
Consider the following instance with $n = \lambda(\lambda+1)$ agents and two alternatives with known locations on the line of real numbers:
\begin{itemize}
     \item Alternative $a$ is at $1$ and alternative $b$ is at $3$;
     \item There are $\lambda$ agents at $0$, $\lambda(\lambda-1)$ agents at $2$, and $\lambda$ agents at $4$.
\end{itemize}
Due to the symmetric locations of the alternatives and the agents, any of the two alternatives can be chosen as the winner. We assume the winner is $a$, without loss of generality. The agents might be partitioned into the following $k=\lambda+1$ symmetric groups of size $\lambda$ each:
\begin{itemize}
     \item The first group consists of all the $\lambda$ agents at $4$;
     \item Each of the remaining $\lambda$ groups consists of one agent at $0$ and $\lambda-1$ agents at $2$. 
\end{itemize}
The total distance of the agents in the first group is $3\lambda$ from $a$ and $\lambda$ from $b$, whereas the total distance of the agents in each of the remaining groups is $\lambda$ from $a$ and $\lambda+2$ from $b$.
Hence, $\cost(a) = 3$ and $\cost(b) = 1+\frac{2}{\lambda}$, leading to a distortion of at least $\frac{3\lambda}{\lambda+2} = 3 - \frac{6}{\lambda+2} > 3 - \varepsilon$.
\end{proof}

It is not hard to obtain a matching upper bound of $3$ by using a mechanism that chooses the winner to be any alternative that minimizes the total distance of all agents. In Appendix~\ref{app:full:max-of-avg} we present a refined analysis of this mechanism, by characterizing the worst-case distortion instances, and we obtain a distortion upper bound of $3-\frac{2\mu}{n}$, where $\mu$ is the smallest group size and $n$ is the number of agents.

\begin{theorem}\label{thm:full:max-of-avg:upper:general}
For Max-of-Avg, the distortion of a mechanism that returns an alternative who minimizes the total distance from all agents is at most $3$.
\end{theorem}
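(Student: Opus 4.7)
The plan is to bound the distortion by comparing the chosen alternative $w$ (the minimizer of total distance) to an optimal alternative $o$ for Max-of-Avg, using two applications of the triangle inequality: one at the agent level, and one at the alternative level.

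First I would fix an arbitrary instance $I$, let $o$ be any Max-of-Avg-optimal alternative, and set $c^* \vcentcolon= \MoA(o\,|\,I)$. A simple but crucial observation is that the ordinary average distance from $o$ is dominated by its Max-of-Avg cost:
\begin{align*}
\frac{1}{n}\sum_{i \in N} d(i,o) \;=\; \frac{1}{n}\sum_{g \in G} n_g \cdot \Bigl(\tfrac{1}{n_g}\sum_{i \in g} d(i,o)\Bigr) \;\leq\; \frac{1}{n}\sum_{g \in G} n_g \cdot c^* \;=\; c^*.
\end{align*}

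Next I would bound $d(o,w)$. Since $w$ minimizes the total distance from all agents, $\sum_i d(i,w) \leq \sum_i d(i,o)$. Combining this with the triangle inequality $d(o,w) \leq d(o,i) + d(i,w)$ summed over all $i$ gives $n \cdot d(o,w) \leq 2\sum_i d(i,o)$, and the previous observation then yields $d(o,w) \leq 2c^*$.

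Finally, let $g^\star$ be a group that attains the Max-of-Avg cost of $w$. Applying the triangle inequality $d(i,w) \leq d(i,o) + d(o,w)$ for each $i \in g^\star$ and averaging over $g^\star$ gives
\begin{align*}
\MoA(w\,|\,I) \;=\; \frac{1}{n_{g^\star}}\sum_{i \in g^\star} d(i,w) \;\leq\; \frac{1}{n_{g^\star}}\sum_{i \in g^\star} d(i,o) + d(o,w) \;\leq\; c^* + 2c^* \;=\; 3c^*,
\end{align*}
which is the desired bound. There is no real obstacle here: the only non-obvious step is realizing that the ``minimizer of total distance'' inequality can be transferred to Max-of-Avg via the standard two-step bound $\MoA(w) \leq \MoA(o) + d(o,w)$ and $d(o,w) \leq 2 \cdot (\text{average distance from }o) \leq 2 \cdot \MoA(o)$; the rest is bookkeeping.
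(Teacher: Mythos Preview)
Your proof is correct and follows essentially the same two-step scheme as the paper: first show $\MoA(w) \leq \MoA(o) + d(w,o)$ via the triangle inequality on the group realizing $w$'s cost, then show $d(w,o) \leq 2\,\MoA(o)$. The only cosmetic difference is in the second step: the paper finds a single group $\gamma$ with $\sum_{i\in\gamma} d(i,w) \leq \sum_{i\in\gamma} d(i,o)$ and averages the triangle inequality over $\gamma$, whereas you average over all $n$ agents using the global inequality $\sum_i d(i,w) \leq \sum_i d(i,o)$ directly---both routes yield the same bound.
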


\begin{proof}
Let $w$ be an alternative that minimizes the total distance from all agents, and let $o$ be an optimal alternative (that minimizes the Max-of-Avg cost according to the unknown groups of the agents). By the definition of $w$, there must exist some group $\gamma$ such that
$\sum_{i\in \gamma} d(i,w) \leq \sum_{i \in \gamma} d(i,o)$; otherwise, the total distance of $o$ from all agents would be strictly less than that of $w$, thus contradicting the choice of $w$. By the definition of the objective function, we also have that $\cost(o) \geq \frac{1}{n_g}\sum_{i \in g} d(i,o)$ for every group $g$. Denoting by $g_w$ the group that determines the cost of $w$ and using the triangle inequality, we have
\begin{align*}
\cost(w) = \frac{1}{n_{g_w}} \sum_{i \in g_w} d(i,w) \leq \frac{1}{n_{g_w}} \sum_{i \in g_w} ( d(i,o) + d(w,o) ) \leq \cost(o) + d(w,o).
\end{align*}
Using the triangle inequality and the property of group $\gamma$ mentioned above, we further have that
\begin{align*}
d(w,o) = \frac{1}{n_\gamma} \sum_{i \in \gamma} d(w,o) \leq \frac{1}{n_\gamma} \sum_{i \in \gamma} ( d(i,w) + d(i,o) ) 
\leq 2 \cdot \frac{1}{n_\gamma} \sum_{i \in \gamma} d(i,o) &\leq 2\cdot \cost(o).
\end{align*}
Combining the two, we obtain $\cost(w) \leq 3\cdot \cost(o)$, i.e., the desired upper bound of $3$. 
\end{proof}

\subsection{Avg-of-Max} \label{sec:full:avg-of-max}

For the Avg-of-Max objective, we first focus on instances where the groups are symmetric (that is, every group consists of the same number $\lambda = n/k$ of agents) and show a tight bound of $3$. 

\begin{theorem} \label{thm:full:avg-of-max:lower:symmetric}
For Avg-of-Max, the distortion of any full-information group-oblivious mechanism is at least $3-\varepsilon$ for any $\varepsilon > 0$, even when there are two alternatives and the groups are symmetric.
\end{theorem}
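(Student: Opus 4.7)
The plan is to closely mimic the lower bound construction from Theorem~\ref{thm:full:max-of-avg:lower}, but with a partition of agents into groups that is tailored to the Avg-of-Max objective instead. Specifically, I would place two alternatives $a$ and $b$ at positions $1$ and $3$ on the real line, and put $\lambda$ agents at $0$, $\lambda(\lambda-1)$ agents at $2$, and $\lambda$ agents at $4$, where $\lambda$ is chosen large enough as a function of $\varepsilon$. This yields $n=\lambda(\lambda+1)$ agents which can be partitioned into $k=\lambda+1$ symmetric groups of size $\lambda$. Since the configuration of agents and alternatives is symmetric around the point $2$ under the reflection $x \mapsto 4-x$, the mechanism cannot distinguish between $a$ and $b$, and we may assume without loss of generality that it outputs $a$.

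The key point for Avg-of-Max is that a single far agent in a group already determines the cost contribution of that group. Thus, to make $a$ look bad while keeping $b$ cheap, I would spread the $\lambda$ agents at position $4$ across $\lambda$ different ``mixed'' groups (filling each such group with $\lambda-1$ agents at position $2$), and then place all $\lambda$ remaining agents at position $0$ into a single ``homogeneous'' group. A direct computation gives
\begin{align*}
\AoM(a) = \frac{\lambda \cdot 3 + 1 \cdot 1}{\lambda+1} = \frac{3\lambda+1}{\lambda+1}, \qquad \AoM(b) = \frac{\lambda \cdot 1 + 1 \cdot 3}{\lambda+1} = \frac{\lambda+3}{\lambda+1},
\end{align*}
since in each mixed group the maximum distance from $a$ is $3$ (due to the agent at $4$) while the maximum distance from $b$ is $1$, and in the homogeneous group the maximum distance from $a$ is $1$ while the maximum distance from $b$ is $3$. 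Hence the distortion is at least $\frac{3\lambda+1}{\lambda+3} = 3 - \frac{8}{\lambda+3}$, which exceeds $3-\varepsilon$ once $\lambda > 8/\varepsilon - 3$.

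There is no serious obstacle in this proof; the construction is a direct adaptation of the Max-of-Avg case. The main subtlety is the choice of partition: since Avg-of-Max is driven by the single farthest agent in each group (rather than an average), one must \emph{spread} the far-from-$a$ agents across as many groups as possible to inflate $\AoM(a)$, and simultaneously \emph{concentrate} the far-from-$b$ agents into a single group so that they inflate $\AoM(b)$ only through one summand; this is exactly the opposite of what was needed for Max-of-Avg. The symmetry argument then works identically to the previous theorem to handle the case where the mechanism outputs $b$ instead.
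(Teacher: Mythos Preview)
Your proposal is correct and follows essentially the same approach as the paper: the same symmetric line instance with agents at $0$, $2$, $4$ and alternatives at $1$, $3$, the same symmetry argument to assume $a$ is the winner, and the same final ratio $\frac{3\lambda+1}{\lambda+3}$. The only difference is that your partition is the mirror image of the one the paper reuses from Theorem~\ref{thm:full:max-of-avg:lower} (you put the agents at $0$ into a single group and spread the agents at $4$, whereas the paper does the reverse); given that the paper's stated costs of $a$ and $b$ with its partition would actually make $a$ optimal, your mirrored partition is in fact the one that makes the argument go through as written.
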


\begin{proof}
Let $\varepsilon>0$ be any constant and $\lambda\in \mathbb{N}$ be such that $\lambda > \frac{8}{\varepsilon} - 3$.
We consider the same instance construction as in the proof of Theorem \ref{thm:full:max-of-avg:lower} on the line of real numbers. Recall that: 
\begin{itemize}
     \item Alternative $a$ is at $1$ and alternative $b$ is at $3$;
     \item There are $\lambda$ agents at $0$, $\lambda(\lambda-1)$ agents at $2$, and $\lambda$ agents at $4$.
\end{itemize}
We assumed that the winner is $a$, which is without loss of generality due to symmetry. The agents are partitioned into the $k=\lambda+1$ symmetric groups:
\begin{itemize}
     \item The first group consists of all the $\lambda$ agents at $4$;
     \item Each of the remaining $\lambda$ groups consists of one agent at $0$ and $\lambda-1$ agents at $2$. 
\end{itemize}
Therefore, $\cost(a) = \left(3+\lambda\right) / (\lambda + 1)$ and $\cost(b) =  \left(1 + 3\lambda \right) / (\lambda + 1)$, leading to a distortion of at least $\frac{3\lambda+1}{\lambda+3} =  3-\frac{8}{\lambda + 3} > 3 - \varepsilon$.
\end{proof}

The tight upper bound follows again by choosing any alternative that minimizes the total distance from all agents; hence, this very simple mechanism is best possible in terms of both the Max-of-Avg objective for general instances and the Avg-of-Max objective for instances with symmetric groups. 

\begin{theorem} \label{thm:full:avg-of-max:upper:symmetric}
For Avg-of-Max and symmetric groups, the distortion of a mechanism that returns an alternative who minimizes the total distance from all agents is at most $3$. 
\end{theorem}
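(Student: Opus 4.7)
The plan is to follow the same two-step template as in the proof of \Cref{thm:full:max-of-avg:upper:general}: first use the triangle inequality to show that $\cost(w) \le \cost(o) + d(w,o)$, and then show $d(w,o) \le 2\cost(o)$, so that combining the two gives the $3$-distortion bound. Here $w$ denotes the alternative returned by the mechanism (a minimizer of the total distance from all agents) and $o$ denotes the Avg-of-Max optimum.

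For the first step, for each group $g$ let $i_g^w \in \argmax_{i \in g} d(i,w)$. Then by the triangle inequality and the fact that $d(i_g^w, o) \le \max_{i\in g} d(i,o)$ for every group $g$,
\begin{align*}
\cost(w) = \frac{1}{k}\sum_{g\in G} d(i_g^w, w) \le \frac{1}{k}\sum_{g\in G} \bigl( d(i_g^w,o) + d(w,o)\bigr) \le \cost(o) + d(w,o).
\end{align*}
This step does not use symmetry of the groups at all.

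For the second step, the symmetric groups assumption enters crucially. Since $n_g = \lambda = n/k$ for every $g$, the maximum of the distances of the agents in $g$ from $o$ is at least their average, so $\max_{i \in g} d(i,o) \ge \frac{1}{\lambda}\sum_{i \in g} d(i,o)$. Summing over all groups and dividing by $k$, I obtain the key comparison $\cost(o) \ge \frac{1}{n}\sum_{i \in N} d(i,o)$, which says that the optimal Avg-of-Max cost is bounded below by the average total distance of all agents from $o$. Now I can bound $d(w,o)$ by pushing it through all the agents: using the triangle inequality pointwise, $n \cdot d(w,o) \le \sum_i \bigl(d(i,w) + d(i,o)\bigr)$, and because $w$ minimizes total distance we have $\sum_i d(i,w) \le \sum_i d(i,o)$, giving $d(w,o) \le \frac{2}{n}\sum_i d(i,o) \le 2\cdot\cost(o)$.

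Combining the two inequalities yields $\cost(w) \le 3\cdot\cost(o)$. The only real obstacle is the second step, specifically the passage from the Avg-of-Max cost (which hides a max inside each group) to a lower bound involving the sum over all agents; this is precisely where the equal group sizes make the averaging clean, and without symmetry the factor would blow up like $k$ (consistent with the asymmetric lower bound stated in \Cref{thm:full:avg-of-max:lower}). The rest is routine bookkeeping with the triangle inequality.
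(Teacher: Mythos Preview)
Your proof is correct and takes a genuinely different (and arguably more elementary) route than the paper's. The paper does not isolate $d(w,o)$ at all; instead it partitions the agents into $\lambda$ transversals $S_1,\ldots,S_\lambda$, each containing exactly one agent per group, picks a transversal $S_\ell$ on which $w$ beats $o$ in total distance, and then routes the triangle inequality for each group through the agent of $S_\ell$ in that group: $d(i_g,w)\le d(i_g,o)+d(f(i_g),o)+d(f(i_g),w)$, summing and using that $S_\ell$ contributes one term per group to bound each of the three sums by $k\cdot\cost(o)$. Your argument replaces this combinatorial transversal step with the one-line observation $\max_{i\in g}d(i,o)\ge\frac{1}{\lambda}\sum_{i\in g}d(i,o)$, which under symmetry immediately gives $\cost(o)\ge\frac{1}{n}\sum_i d(i,o)$ and hence $d(w,o)\le 2\cost(o)$ via the global minimality of $w$. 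What you gain is simplicity and a proof that parallels the Max-of-Avg argument of \Cref{thm:full:max-of-avg:upper:general} almost verbatim; what the paper's transversal argument buys is a template that makes explicit \emph{which} agents witness the three copies of $\cost(o)$, but that extra structure is not needed for the bound here.
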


\begin{proof}
Let $w$ be an alternative that minimizes the total distance from all agents, and denote by $o$ an optimal alternative (that minimizes the Avg-of-Max cost according to the $k$ unknown groups of agents).
Let $S_1, \ldots, S_\lambda$ be any partition of the agents into $\lambda = n/k$ disjoint sets of size $k$ such that each set consists of one agent per group; note that there are multiple such partitions of the agents since the groups are symmetric. By the definition of $w$, there must exist some $\ell \in [\lambda]$ such that $\sum_{i \in S_\ell}d(i,w) \leq \sum_{i \in S_\ell} d(i,o)$ since, otherwise, the total distance of $o$ from the agents would be strictly less than that of $w$, thus contradicting the choice of $w$. Let $i_g$ be a most-distant agent in group $g$ from $w$, i.e., $i_g\in \argmax_{i\in g} d(i, w)$. By matching each $i_g$ to a unique agent $f(i_g) \in S_\ell$ (i.e., $f: \{i_{g_1}, \ldots, i_{g_k}\} \to S_\ell$ is a bijection), we can rewrite the property of set $S_\ell$ as 
\[\sum_{g\in G} d(f(i_g),w) \leq \sum_{g\in G} d(f(i_g),o).\]  
In addition, by the definition of the objective function, we have that 
\[\cost(o) \geq \frac{1}{k} \sum_{g\in G} d(i_g,o)\] 
and 
\[\cost(o) \geq \frac{1}{k} \sum_{i \in S_\ell} d(i,o) = \frac{1}{k} \sum_{g\in G} d(f(i_g),o).\] 
Hence, by applying the triangle inequality twice, we obtain
\begin{align*}
\cost(w) 
&= \frac{1}{k} \sum_{g\in G} d(i_g,w) \\
&\leq \frac{1}{k} \sum_{g\in G} \big( d(i_g,o) + d(f(i_g),o) + d(f(i_g),w) \big) \\
& \leq 3 \cdot \cost(o),
\end{align*}
which shows the desired upper bound of $3$. 
\end{proof}


We now turn our attention to the general case where the groups might be asymmetric and show a tight bound of $k$.

\begin{theorem} \label{thm:full:avg-of-max:lower}
For Avg-of-Max, the distortion of any full-information group-oblivious mechanism is at least $k$, even when there are two alternatives.
\end{theorem}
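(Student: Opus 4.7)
The plan is to construct a single metric configuration on two alternatives that is symmetric under swapping them, and then exhibit, for each possible choice of the mechanism, an adversarial asymmetric partition that witnesses distortion exactly $k$. Since any deterministic full-information group-oblivious mechanism is a function of the metric information only, its output is fixed across all instances that share this metric, so it suffices to pair that output with the corresponding worst-case partition.

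Concretely, I would place alternatives $a$ and $b$ at positions $0$ and $1$ on the real line, put $k$ agents at position $0$ (co-located with $a$), and put $k$ agents at position $1$ (co-located with $b$). The distance picture is invariant under swapping $a \leftrightarrow b$, so we may assume without loss of generality that the mechanism returns $a$.

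For the adversarial partition, I would form group $g_1$ from all $k$ agents at $0$ together with exactly one agent at $1$, and place each of the remaining $k-1$ agents at $1$ alone in groups $g_2, \ldots, g_k$. A short computation gives $\max_{i \in g_1} d(i,a) = \max_{i \in g_1} d(i,b) = 1$, and for every $j \geq 2$, $\max_{i \in g_j} d(i,a) = 1$ while $\max_{i \in g_j} d(i,b) = 0$. Hence $\AoM(a) = \frac{1}{k}\bigl(1 + (k-1)\bigr) = 1$ and $\AoM(b) = \frac{1}{k}$, yielding distortion $\AoM(a)/\AoM(b) = k$.

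The only design subtlety is concentrating all $a$-agents into a single group while keeping the $k-1$ other groups as singleton $b$-agents; this asymmetry is precisely what was ruled out in the symmetric-groups setting of \Cref{thm:full:avg-of-max:lower:symmetric} (which topped out at $3$) and is what allows us to drive the lower bound up from a constant to $k$. If the mechanism returns $b$ instead of $a$, I would simply use the mirror partition obtained by exchanging the roles of the agents at $0$ and the agents at $1$, and the same computation goes through. I do not anticipate any genuine obstacle beyond the sanity check that $k \geq 2$ and that there are enough agents at position $1$ (namely $k$ of them) to populate $g_1$ and all singleton groups.
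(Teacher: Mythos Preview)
Your proposal is correct and matches the paper's own proof essentially line for line: the same $2k$-agent construction on $\{0,1\}$, the same symmetry argument to assume $a$ is chosen, and the same adversarial partition (one large group containing all $a$-agents plus a single $b$-agent, with the remaining $k-1$ $b$-agents in singleton groups). The computations $\AoM(a)=1$ and $\AoM(b)=1/k$ are identical to the paper's.
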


\begin{proof}
Consider the following instance with $n=2k$ agents and two alternatives on the line of real numbers:
\begin{itemize}
    \item Alternative $a$ is at $0$ and alternative $b$ is at $1$;
    \item There are $k$ agents at $0$ and $k$ agents at $1$.
\end{itemize}
Due to symmetry, given only this information, any of the two alternatives can be chosen as the winner. Without loss of generality, we assume the winner is $a$. 
In that case, however, the agents might be split into $k$ groups as follows:
\begin{itemize}
    \item The first group consists of all agents at $0$ and one agent at $1$;
    \item Each of the remaining $k-1$ groups consists of a single agent at $1$.
\end{itemize}
Hence, $\cost(a) = 1$ and $\cost(b) = 1 / k$, leading to a distortion of $k$. 
\end{proof}

For the upper bound, we first remark that choosing any alternative that minimizes the total distance from all agents (as we did in the case of Avg-of-Max, or Max-of-Avg with symmetric groups) leads to a distortion of at least $2k+1$. Nevertheless, we can achieve a matching bound of $k$ by choosing any alternative that minimizes the maximum distance from the agents. 

\begin{theorem} \label{thm:full:avg-of-max:upper}
For Avg-of-Max, the distortion of a mechanism that returns an alternative who minimizes the maximum distance from any agent is at most $k$.    
\end{theorem}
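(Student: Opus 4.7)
The plan is to chain together three simple inequalities relating the Avg-of-Max cost to the maximum distance from any agent. Let $w$ be the alternative returned by the mechanism, i.e., $w \in \argmin_x \max_{i \in N} d(i,x)$, and let $o$ be an optimal alternative for Avg-of-Max.

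First, I would upper-bound $\cost(w)$ by the maximum distance from any agent to $w$. Since for each group $g$ we have $\max_{i \in g} d(i,w) \leq \max_{i \in N} d(i,w)$, averaging over the $k$ groups gives $\cost(w) \leq \max_{i \in N} d(i,w)$. Second, I would use the defining property of $w$: since $w$ minimizes the maximum distance over all agents, $\max_{i \in N} d(i,w) \leq \max_{i \in N} d(i,o)$. Third, I would lower-bound $\cost(o)$ in terms of $\max_{i \in N} d(i,o)$. Writing $\max_{i \in N} d(i,o) = \max_{g \in G} \max_{i \in g} d(i,o)$, the sum of nonnegative terms $\sum_{g \in G} \max_{i \in g} d(i,o)$ is at least its largest term, so
\begin{align*}
\cost(o) = \frac{1}{k} \sum_{g \in G} \max_{i \in g} d(i,o) \geq \frac{1}{k} \max_{g \in G} \max_{i \in g} d(i,o) = \frac{1}{k} \max_{i \in N} d(i,o).
\end{align*}

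Chaining these three observations yields $\cost(w) \leq \max_{i \in N} d(i,w) \leq \max_{i \in N} d(i,o) \leq k \cdot \cost(o)$, which is the claimed bound. There is no real obstacle here: the proof does not even require the triangle inequality, only the monotonicity of the max and the averaging inequality. The key conceptual point, which matches the lower bound of Theorem~\ref{thm:full:avg-of-max:lower}, is that under Avg-of-Max the worst group's max cost is diluted by a factor of $k$ in the objective, so controlling the global max distance already suffices up to a factor of $k$.
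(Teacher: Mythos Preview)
Your proof is correct and essentially identical to the paper's. The paper introduces $i_w$ and $i_o$ as the most-distant agents from $w$ and $o$, so its $d(i_w,w)$ and $d(i_o,o)$ are exactly your $\max_{i\in N}d(i,w)$ and $\max_{i\in N}d(i,o)$, and the same three inequalities are chained in the same order.
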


\begin{proof}
Let $w$ be the chosen alternative and $o$ an optimal alternative. Let $i_w$ and $i_o$ be the most distant agents from $w$ and $o$, respectively. Then, by the definition of $w$, $d(i_w,w) \leq d(i_o,o)$. By the definition of $i_w$, $d(i,w) \leq d(i_w,w)$ for every agent $i$. 
Hence, 
\begin{align*}
    \cost(w) = \frac{1}{k} \sum_{g \in G} \max_{i \in g} d(i,w) \leq \frac{1}{k} \sum_{g \in G} d(i_w,w) = d(i_w,w).  
\end{align*}
On the other hand, 
\begin{align*}
    \cost(o) = \frac{1}{k} \sum_{g \in G} \max_{i \in g} d(i,o) \geq \frac{1}{k}\, d(i_o,o) \geq \frac{1}{k}\, d(i_w,w). 
\end{align*}
Consequently, the distortion is at most $k$. 
\end{proof}

\section{Ordinal-Information Group-Oblivious mechanisms}\label{sec:ordinal}
We now consider mechanisms that are given access to ordinal information about the preferences of the agents over the alternatives, but are still oblivious to how the agents are partitioned into groups. Recall that every agent $i$ reports a ranking of the alternatives such that,
if $d(i,x) < d(i,y)$ for alternatives $x$ and $y$, then $i$ ranks $x$ higher $y$.
For the Max-of-Avg objective, we show a tight bound of $5$ on the distortion of ordinal-information group-oblivious mechanisms. For the Avg-of-Max objective, we show that the distortion is exactly $5$ when the groups are symmetric, and is exactly $2k+1$ when the groups are asymmetric.

\subsection{Max-of-Avg} \label{sec:ordinal-max-of-avg}
We start by showing a lower bound of $5$ on the distortion of any mechanism. 

\begin{theorem} \label{thm:ordinal:max-of-avg:lower}
For Max-of-Avg, the distortion of any ordinal-information group-oblivious mechanism is at least $5 - \varepsilon$ for any $\varepsilon > 0$, even when there are only two alternatives and the groups are symmetric.
\end{theorem}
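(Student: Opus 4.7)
My plan is to adapt the construction of \Cref{thm:full:max-of-avg:lower} to the ordinal setting by exhibiting a pair of metric instances $I_1$ and $I_2$ that share the same ordinal profile, yet force any deterministic mechanism into distortion arbitrarily close to $5$ in at least one of them. I will make the common profile perfectly balanced, with exactly $n/2$ rankings of each type $a \succ b$ and $b \succ a$, and I will take $I_2$ to be the mirror image of $I_1$ through the midpoint of $a$ and $b$, so that the mechanism is committed to the same output on both instances and loses against whichever one makes its choice bad.

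For $I_1$, I would place $a=0$ and $b=1$ on the line, pick a large integer $k$, set $\lambda = 2(k-1)$, and use three agent positions: $k(k-1)$ agents at $1/2-\epsilon$ who strictly prefer $a$; $(k-2)(k-1)$ agents at $1$ who strictly prefer $b$; and $\lambda$ agents at $5/4$ who also prefer $b$ but lie far from $a$. The totals come out to $n = 2k(k-1) = k\lambda$ and the profile is perfectly balanced. I would then take one ``bad'' group $g^{\ast}$ containing all $\lambda$ agents at $5/4$, and split the remainder evenly into $k-1$ other groups, each consisting of exactly $k$ agents at $1/2-\epsilon$ and $k-2$ agents at $1$. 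A direct calculation shows that $g^{\ast}$ contributes $5/4$ to the average distance from $a$ and $1/4$ to that from $b$, while each other group has average distance approximately $3/4$ from $a$ and $1/4$ from $b$ (up to $O(1/k) + O(\epsilon)$ terms). Hence $\MoA(a) \to 5/4$ and $\MoA(b) \to 1/4$ as $k \to \infty$ and $\epsilon \to 0$, so selecting $a$ gives distortion approaching $5$; by reflection, selecting $b$ does the same in $I_2$, which yields the claim since both instances have the same ordinal input.

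The main obstacle is to engineer the instance so that two quantitative conditions hold at once: balance of the profile (so that the mirror argument applies to any mechanism) and the matching of the per-group average distance from $b$ to the value $1/4$ that $g^{\ast}$ already contributes (so that $\MoA(b)$ is not inflated by the other groups, which would reduce the achievable ratio well below $5$). Balance pins down the ratio $n_a : n_b = 1:1$, and matching the $1/4$ bound inside each non-bad group pins down the ratio $k : k-2$ between agents at $1/2-\epsilon$ and agents at $1$. The specific positions $1/2-\epsilon$, $1$, $5/4$ together with the counts $k(k-1)$, $(k-2)(k-1)$, $2(k-1)$ are chosen precisely to satisfy both constraints and to match the tight case of the upper-bound argument.
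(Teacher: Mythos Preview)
Your proposal is correct and follows essentially the same approach as the paper: a perfectly balanced two-alternative ordinal profile, a symmetry/mirror argument to fix the winner as $a$, one ``far'' group of $b$-voters that drives $\MoA(a)$ up, and the remaining groups engineered so that their average distance to $b$ matches that of the far group in the limit, pushing the ratio to $5$. The paper's instance uses $\lambda+1$ groups of size $\lambda$ with agents at $1$, $2$, and $2+\frac{\lambda+1}{2\lambda}$ (achieving exact balance of the per-group cost for $b$), whereas you use $k$ groups of size $2(k-1)$ with agents at $\tfrac12-\epsilon$, $1$, and $\tfrac54$ (balancing only asymptotically), but the construction and analysis are the same in spirit.
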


\begin{proof}
Let $\varepsilon>0$ be any constant and $\lambda \geq 3$ be an odd integer such that $\lambda > \frac{4}{\varepsilon} - 1$. 
Consider an instance with $n = \lambda^2+\lambda = \lambda(\lambda+1)$ agents and two alternatives $\{a,b\}$; clearly $\lambda^2+\lambda$ is an even number. Half of the agents prefer $a$ and the other half prefer $b$. With this information, any of the two alternatives can be chosen as the winner, so assume that the winner is $a$. The agents are partitioned into $k=\lambda+1$ symmetric groups of $\lambda$ agents each. 
Consider the scenario where the metric space is the line of real numbers and the grouping of the agents is as follows: 
\begin{itemize}
    \item $a$ is at $0$ and $b$ is at $2$;
    \item In the first group, all $\lambda$ agents prefer $b$ and are all positioned at $2+\frac{\lambda+1}{2\lambda}$.
    \item In each of the remaining $\lambda$ groups, there are $\frac{\lambda+1}{2}$ agents that prefer $a$ and are positioned at $1$, and $\frac{\lambda-1}{2}$ agents that prefer $b$ and are positioned at $2$.
\end{itemize}
The total distance of the agents in the first group is $\lambda\left(2+\frac{\lambda+1}{2\lambda}\right) = \frac{5\lambda+1}{2}$ from $a$ and $\lambda\cdot \frac{\lambda+1}{2\lambda} = \frac{\lambda+1}{2}$ from $b$. In each of the remaining $\lambda$ groups, the total distance of the agents therein is 
$\frac{\lambda+1}{2} + 2\cdot \frac{\lambda-1}{2} = \frac{3\lambda-1}{2}$ from $a$ and $\frac{\lambda+1}{2}$ from $b$. Consequently, $\cost(a)=\frac{5\lambda+1}{2\lambda}$ (realized by the first group) and $\cost(b)=\frac{\lambda+1}{2\lambda}$ (realized by any of the groups), leading to a distortion of at least $5 - \frac{4}{\lambda+1} > 5-\varepsilon$.
\end{proof}

We now show that there are ordinal-information group-oblivious mechanisms which do achieve this best possible bound of $5$. The {\em domination graph} of an alternative $x$ is a bipartite graph $G_x = (N,N,E_x)$ with the set of agents on both sides and set of (directed) edges such that $(i,j) \in E_x$ if and only if $i$ prefers $x$ over the most-preferred alternative $\favorite(j)$ of $j$, that is, $d(i,x) \leq d(i,\favorite(j))$. We focus on alternatives whose domination graphs attain perfect matchings. There are several voting rules that compute alternatives with this property, such as {\sc PluralityMatching}~\citep{gkatzelis2020resolving} and {\sc PluralityVeto}~\citep{kempe2022veto}. The distortion of these rules in terms of the social cost (the total distance of the agents) is known to be exactly $3$. We show the following property of such alternatives, which will be useful in some of our upper bounds. 

\begin{lemma}\label{lem:domination-3-distortion}
Given an instance, let $x$ be some alternative whose domination graph attains a perfect matching, and $y$ any other alternative. 
Then, 
\[d(x,y) \leq \frac{4}{n} \cdot \sum_{g \in G} \sum_{i \in g} d(i,y).\] 
\end{lemma}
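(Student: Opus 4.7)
The plan is to reduce the claim to the known fact that any alternative whose domination graph admits a perfect matching has social cost at most $3$ times that of any other alternative, and then combine this with a straightforward triangle inequality averaged over all agents.

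First I would rewrite the right-hand side as $\tfrac{4}{n}\cdot\mathrm{SC}(y)$, where $\mathrm{SC}(y)=\sum_{i\in N}d(i,y)$ is the total (social) cost of $y$, using that the groups partition the agents. Next, for each agent $i\in N$ the triangle inequality gives $d(x,y)\leq d(x,i)+d(i,y)$. Summing over all $n$ agents and dividing by $n$, I would obtain the bound
\begin{align*}
d(x,y) \;\leq\; \frac{1}{n}\bigl(\mathrm{SC}(x)+\mathrm{SC}(y)\bigr).
\end{align*}

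The remaining task is to show $\mathrm{SC}(x)\leq 3\cdot \mathrm{SC}(y)$, which is the standard $3$-distortion guarantee of perfect-matching domination (as used by \textsc{PluralityMatching} and \textsc{PluralityVeto}). To keep the proof self-contained, I would spell it out: let $\pi$ be a perfect matching of $G_x$, so $d(i,x)\leq d(i,\favorite(\pi(i)))$ for every $i$. Applying the triangle inequality to route through $y$ and then through the agent $\pi(i)$ (and using that $\favorite(\pi(i))$ is $\pi(i)$'s top alternative, hence at most as far from $\pi(i)$ as $y$ is), I would get
\begin{align*}
d(i,x) \;\leq\; d(i,\favorite(\pi(i))) \;\leq\; d(i,y)+d(y,\pi(i))+d(\pi(i),\favorite(\pi(i))) \;\leq\; d(i,y)+2\,d(\pi(i),y).
\end{align*}
Summing over all agents $i$, and using that $\pi$ is a bijection so that $\sum_i d(\pi(i),y)=\mathrm{SC}(y)$, yields $\mathrm{SC}(x)\leq 3\cdot \mathrm{SC}(y)$.

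Plugging this into the displayed inequality above produces $d(x,y)\leq \tfrac{4}{n}\cdot \mathrm{SC}(y)$, which is exactly the claim. The only mildly nontrivial step is the perfect-matching-to-social-cost bound, but this is essentially a repackaging of the argument in \cite{gkatzelis2020resolving,kempe2022veto}; the rest is just averaging the triangle inequality over the agents.
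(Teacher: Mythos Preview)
Your proof is correct and is essentially the same as the paper's: both obtain $n\cdot d(x,y)\le \mathrm{SC}(x)+\mathrm{SC}(y)$ by averaging the triangle inequality, then bound $\mathrm{SC}(x)\le 3\cdot\mathrm{SC}(y)$ via the perfect matching and the chain $d(i,x)\le d(i,\favorite(\pi(i)))\le d(i,y)+2\,d(\pi(i),y)$. The only difference is presentational---you isolate the $3$-distortion social-cost inequality as an explicit intermediate step, whereas the paper runs the whole computation in a single display.
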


\begin{proof}
Let $\boldsymbol{\mu}=(\mu(i))_i$ be the perfect matching in the domination graph $G_x$ of $x$; that is, agent $i$ is matched to agent $\mu(i)$. By the triangle inequality, the property of the domination graph that $d(i,x) \leq d(i,\favorite(\mu(i)))$, the fact that $M$ is a perfect matching, and the fact that $d(i,\favorite(i)) \leq d(i,y)$ for any $i$, we have
\begin{align*}
n \cdot d(x,y) 
= \sum_{i \in N} d(x,y) 
&\leq \sum_{i \in N} d(i,x) + \sum_{i \in N}  d(i,y) \\
&\leq \sum_{i \in N} d(i,\favorite(\mu(i))) + \sum_{i \in N}  d(i,y)\\
&\leq \sum_{i \in N} \big( d(i,y) + d(\mu(i),y) + d(\mu(i),\favorite(\mu(i))) \big) + \sum_{i \in N}  d(i,y) \\
&\leq \sum_{i \in N} d(i,y) + 2 \cdot \sum_{i \in N} d(\mu(i),y) + \sum_{i \in N}  d(i,y) \\
&= 4 \cdot \sum_{g \in G} \sum_{i \in g}  d(i,y).
\end{align*}
The statement now follows by dividing each side of the inequality by $n$. 
\end{proof}

We are now ready to show the upper bound of $5$ for Max-of-Avg.

\begin{theorem} \label{thm:ordinal:max-of-avg:upper}
For Max-of-Avg, the distortion of a mechanism that returns an alternative whose domination graph has a perfect matching is at most $5$.
\end{theorem}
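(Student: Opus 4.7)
The plan is to mimic the argument used in Theorem \ref{thm:full:max-of-avg:upper:general}, splitting $\cost(w)$ into $\cost(o) + d(w,o)$ by the triangle inequality, and then bounding $d(w,o)$ by $4\cdot\cost(o)$ using Lemma \ref{lem:domination-3-distortion}. Let $w$ be the alternative returned by the mechanism (so $w$'s domination graph admits a perfect matching), and let $o$ be an optimal alternative for Max-of-Avg.

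First, let $g_w$ be the group realizing the Max-of-Avg cost of $w$, i.e.\ $\cost(w) = \frac{1}{n_{g_w}} \sum_{i \in g_w} d(i,w)$. Applying the triangle inequality termwise, exactly as in the proof of Theorem~\ref{thm:full:max-of-avg:upper:general}, I get
\[
\cost(w) \;\leq\; \frac{1}{n_{g_w}} \sum_{i \in g_w} \bigl( d(i,o) + d(w,o) \bigr) \;\leq\; \cost(o) + d(w,o),
\]
since $\frac{1}{n_{g_w}} \sum_{i \in g_w} d(i,o) \leq \max_g \frac{1}{n_g} \sum_{i \in g} d(i,o) = \cost(o)$.

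Next, I invoke Lemma~\ref{lem:domination-3-distortion} (the key ingredient that is available because $w$'s domination graph attains a perfect matching) with $x=w$ and $y=o$, obtaining
\[
d(w,o) \;\leq\; \frac{4}{n} \sum_{g \in G} \sum_{i \in g} d(i,o).
\]
For each group $g$, by definition of the Max-of-Avg objective, $\sum_{i \in g} d(i,o) = n_g \cdot \frac{1}{n_g}\sum_{i \in g} d(i,o) \leq n_g \cdot \cost(o)$. Summing over all groups and using $\sum_g n_g = n$ yields $\sum_{g}\sum_{i \in g} d(i,o) \leq n \cdot \cost(o)$, hence $d(w,o) \leq 4 \cdot \cost(o)$.

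Combining the two bounds gives $\cost(w) \leq \cost(o) + 4\cdot\cost(o) = 5 \cdot \cost(o)$, which is the desired distortion bound. The only nontrivial piece is Lemma~\ref{lem:domination-3-distortion}, which has already been established; after that, the proof is essentially the same two-step triangle-inequality argument as in the full-information case, with the factor $2$ there replaced by $4$ here because ordinal information forces us to route through a perfect-matching bound on $d(w,o)$ rather than a direct one.
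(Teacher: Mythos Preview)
Your proof is correct and follows essentially the same approach as the paper: split $\cost(w)\le \cost(o)+d(w,o)$ via the triangle inequality on the group $g_w$, then bound $d(w,o)\le 4\cdot\cost(o)$ by applying Lemma~\ref{lem:domination-3-distortion} together with $\sum_{g}\sum_{i\in g} d(i,o)\le n\cdot\cost(o)$.
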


\begin{proof}
Let $w$ be the chosen alternative (whose domination graph has a perfect matching), and $o$ an optimal alternative. Let $g_w$ be the group that determines the maximum cost of $w$. By the definition of Max-of-Avg, we have $n_g \cdot \cost(o) \geq  \sum_{i \in g} d(i,o)$ for any group $g$. Since $n = \sum_g n_g$, by adding all these inequalities together, we have 
\begin{align}
n \cdot \cost(o) \geq \sum_{g \in G} \sum_{i \in g} d(i,o). \label{eq:max-of-avg:cost-o}
\end{align}
By the triangle inequality, we have
\begin{align}
\cost(w) 
&= \frac{1}{n_{g_w}} \sum_{i \in g_w} d(i,w) \nonumber \\
&\leq \frac{1}{n_{g_w}} \sum_{i \in g_w} d(i,o) + \frac{1}{n_{g_w}} \sum_{i \in g_w} d(w,o) \nonumber \\
&\leq \cost(o) + d(w,o).  \label{eq:max-of-avg:cost-w-to-bound}
\end{align}
By Lemma~\ref{lem:domination-3-distortion} with $x=w$ and $y=o$, and using \eqref{eq:max-of-avg:cost-o}, we have
\begin{align*}
    d(w,o) \leq \frac{4}{n} \cdot \sum_{g \in G} \sum_{i \in g} d(i,o) \leq 4 \cdot\cost(o), 
\end{align*}
which, combined with \eqref{eq:max-of-avg:cost-w-to-bound}, leads to 
\[\cost(w) \leq 5 \cdot \cost(o),\]
which directly implies the desired upper bound.  
\end{proof}

\subsection{Avg-of-Max} \label{sec:ordinal-avg-of-max}

For the Avg-of-Max cost, we first consider the case of symmetric groups, in which $n_g = \lambda$ for every $g$, and show a tight bound of $5$ on the distortion of ordinal-information group-oblivious mechanisms.  

\begin{theorem} \label{thm:ordinal:avg-of-max:lower:symmetric}
For Avg-of-max, the distortion of any ordinal-information group-oblivious mechanism is at least $5$, even when there are only two alternatives and the groups are symmetric. 
\end{theorem}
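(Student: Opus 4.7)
The plan is to adapt the lower-bound strategy of Theorem~\ref{thm:ordinal:max-of-avg:lower} to $\AoM$, producing a sequence of instances whose distortion ratio approaches~$5$. With only two alternatives, the ordinal input available to any mechanism is just the count of agents preferring each one, so by choosing a balanced profile (equal counts) I can drive the mechanism to output either alternative; by the symmetry of swapping $a$ and $b$ I then assume without loss of generality that it outputs $a$ and construct a metric realisation in which $a$ is almost $5$ times worse than $b$.

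Fix an integer $\lambda \geq 3$, place $a$ at $0$ and $b$ at $2$ on the real line, and use $k = 2(\lambda - 1)$ symmetric groups of size $\lambda$ each, split into two types. Each of the $\lambda - 2$ \emph{near-$b$} groups consists of $\lambda$ agents at position $2 + \delta$ (all of whom prefer $b$), contributing $2 + \delta$ to $\AoM(a)$ and $\delta$ to $\AoM(b)$. Each of the remaining $\lambda$ \emph{mixed} groups consists of $\lambda - 1$ agents at $1 - \epsilon$ (who prefer $a$) together with one ``far'' agent at position $3$ that prefers $b$ because $d(3,a)=3>1=d(3,b)$; such a group contributes $3$ to $\AoM(a)$ and $1 + \epsilon$ to $\AoM(b)$. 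A direct count shows that both preference counts equal $\lambda(\lambda - 1)$, so the profile is genuinely balanced.

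Summing the per-group contributions yields
\[\AoM(a) = \frac{(\lambda - 2)(2 + \delta) + 3\lambda}{2(\lambda - 1)} \quad \text{and} \quad \AoM(b) = \frac{(\lambda - 2)\delta + \lambda(1 + \epsilon)}{2(\lambda - 1)}.\]
Sending $\delta, \epsilon \to 0$ and then $\lambda \to \infty$, the ratio $\AoM(a)/\AoM(b)$ tends to $(5\lambda - 4)/\lambda \to 5$, establishing the desired lower bound.

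The main creative obstacle is the design of the mixed groups. Because $\AoM$ depends only on the most distant agent in each group, the Max-of-Avg trick of clustering all $\lambda$ agents at a cleverly chosen point is wasted; a quick per-group calculation shows that if a group contains any $a$-preferrer, then the triangle inequality forces $\max_i d(i,a) \leq \max_i d(i,b) + 2$ with $\max_i d(i,b) \geq 1$, so the per-group ratio cannot exceed~$3$. The way to break past $3$ on average is to combine \emph{mixed} groups that saturate this per-group bound (a lone $b$-preferrer at distance $3$ from $a$ but only $1$ from $b$, paired with many $a$-preferrers huddled near the midpoint) with \emph{near-$b$} groups whose max-distance from $b$ is vanishing. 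The balance constraint $\lambda(\lambda - 1) = (\lambda - 2)\lambda + \lambda$ pins the proportion of the two group types, and this proportion is exactly what produces the constant $5$ in the limit.
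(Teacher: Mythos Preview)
Your proof is correct and takes essentially the same approach as the paper: the same balanced two-alternative profile, the same partition into $2(\lambda-1)$ symmetric groups of size $\lambda$ split into $\lambda$ mixed groups (each with $\lambda-1$ agents near the midpoint preferring $a$ and one agent at $3$ preferring $b$) and $\lambda-2$ pure-$b$ groups at or near $b$, yielding the same limiting ratio $(5\lambda-O(1))/\lambda\to 5$. The only cosmetic difference is that you place the pure-$b$ agents at $2+\delta$ rather than exactly at $2$; setting $\delta=0$ recovers the paper's instance verbatim.
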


\begin{proof}
Let $\varepsilon>0$ be any constant and $\lambda \geq 2$ be an integer such that $\lambda > \frac{4}{\varepsilon}$. 
Consider an instance with $n=2\lambda(\lambda-1)$ agents and two alternatives $\{a,b\}$; clearly, $n$ is even. Half of the agents prefer $a$ while the remaining half prefer $b$. With this information, any of the two alternatives can be chosen as the winner, so assume that the winner is $a$. The agents might be split into $k = 2(\lambda-1)$ groups of $\lambda$ agents each as follows:
\begin{itemize}
    \item There are $\lambda$ groups, each consisting of $\lambda-1$ agents that prefer $a$ and one agent that prefers $b$;
    \item There are $\lambda-2$ groups, each consisting of $\lambda$ agents that prefer $b$.
\end{itemize}
Further, consider the metric space being the line of real numbers and the positioning of the alternatives and the agents being as follows: 
\begin{itemize}
    \item $a$ is at $0$ and $b$ is at $2$;
    \item All agents that prefer $a$ are at $1-\varepsilon/10$;
    \item The $\lambda$ agents that prefer $b$ and are part of the first $\lambda$ groups (in which there are agents that prefer $a$) are at $3$;
    \item The remaining $\lambda(\lambda-2)$ agents that prefer $b$ are at $2$. 
\end{itemize}
We have that 
\begin{align*}
    k \cdot \cost(a) = \lambda \cdot 3 + (\lambda-2)\cdot 2 = 5\lambda-2
\end{align*}
and 
\begin{align*}
    k \cdot \cost(b) = \lambda \cdot (1 + \varepsilon/10) + (\lambda-2)\cdot 0 = \lambda(1 + \varepsilon/10),
\end{align*}
leading to a distortion of at least $\frac{5}{1 + \varepsilon/10} - \frac{2}{\lambda(1 + \varepsilon/10)}  > 5- \frac{\varepsilon}{2} - \frac{2}{\lambda}> 5- \varepsilon$, where the first inequality is just a matter of simple calculations.
\end{proof}

For the upper bound, we consider again mechanisms that output alternatives whose domination graphs have perfect matchings, and show an upper bound of $5$ with a proof similar to the one used for the Max-of-Avg objective. 

\begin{theorem} \label{thm:ordinal:avg-of-max:upper:symmetric}
For Avg-of-Max and symmetric groups, the distortion of a mechanism that returns an alternative whose domination graph has a perfect matching is at most $5$.
\end{theorem}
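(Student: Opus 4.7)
The plan is to mirror the structure of the Max-of-Avg upper bound (Theorem \ref{thm:ordinal:max-of-avg:upper}), adapting the argument to the Avg-of-Max objective while exploiting the symmetry of the groups in one key step. Let $w$ be the alternative returned by the mechanism (so its domination graph admits a perfect matching) and $o$ be an Avg-of-Max-optimal alternative.

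First, I would bound $\cost(w)$ via the triangle inequality. For each group $g$, let $i_g^w \in \argmax_{i \in g} d(i,w)$. Then
\[
d(i_g^w, w) \leq d(i_g^w, o) + d(w,o) \leq \max_{i \in g} d(i,o) + d(w,o),
\]
and averaging over the $k$ groups yields $\cost(w) \leq \cost(o) + d(w,o)$, exactly as in the Max-of-Avg case.

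Next, I need to bound $d(w,o)$ by $4\cdot\cost(o)$, which requires relating the total distance $\sum_{g\in G}\sum_{i \in g} d(i,o)$ to the Avg-of-Max cost of $o$. This is the step where the symmetry assumption ($n_g = \lambda = n/k$ for every group $g$) is used: since the average of distances within a group is dominated by the maximum, we have $\sum_{i \in g} d(i,o) \leq \lambda \cdot \max_{i \in g} d(i,o)$, and summing over groups gives
\[
\sum_{g\in G}\sum_{i \in g} d(i,o) \leq \lambda \sum_{g \in G}\max_{i\in g} d(i,o) = \lambda k\cdot \cost(o) = n \cdot \cost(o).
\]
Applying Lemma \ref{lem:domination-3-distortion} with $x=w$, $y=o$ then yields $d(w,o) \leq \frac{4}{n} \cdot n \cdot \cost(o) = 4\cdot\cost(o)$.

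Combining the two bounds gives $\cost(w) \leq 5 \cdot \cost(o)$, the desired distortion upper bound. There is essentially no major obstacle here; the only subtlety is that the passage from $\sum_g \sum_{i\in g} d(i,o)$ to $\cost(o)$ relies crucially on all groups having the same size $\lambda$ (as the factor $n_g$ in the per-group inequality must be uniform to combine with $k\cdot\cost(o)$), which is precisely why the asymmetric case requires a different mechanism and yields a weaker bound of $2k+1$ in the subsequent theorems.
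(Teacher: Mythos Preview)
Your proposal is correct and follows essentially the same approach as the paper's proof: both first bound $\cost(w) \leq \cost(o) + d(w,o)$ via the triangle inequality applied to the per-group maximizers of $d(\cdot,w)$, and then bound $d(w,o) \leq 4\cdot\cost(o)$ by invoking Lemma~\ref{lem:domination-3-distortion} together with the symmetric-group inequality $\sum_{g}\sum_{i\in g} d(i,o) \leq \lambda k\cdot\cost(o) = n\cdot\cost(o)$. The arguments are identical up to notation.
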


\begin{proof}
Consider any instance with $k$ symmetric groups, each consisting of $\lambda = n/k$ agents. 
Let $w$ be an alternative whose domination graph has a perfect matching, and $o$ an optimal alternative. 
For every group $g$, let $i_g$ and $i_g^*$ be most-distant agents from $w$ and $o$, respectively. 
Clearly, 
\[\cost(o) = \frac{1}{k}\cdot\sum_{g \in G} d(i_g^*,o) \geq \frac{1}{k}\cdot \sum_{g \in G} d(i_g,o).\]
By the triangle inequality, we have
\begin{align}
\cost(w) 
&= \frac{1}{k}\cdot \sum_{g \in G} d(i_g,w) \nonumber \\
&\leq \frac{1}{k}\cdot \sum_{g \in G} d(i_g,o) + \frac{1}{k}\cdot \sum_{g \in G} d(w,o) \nonumber \\
&\leq \cost(o) + d(w,o)  \label{eq:avg-of-max:symmetric:cost-w-to-bound}
\end{align}
By Lemma~\ref{lem:domination-3-distortion} with $x=w$ and $y=o$, and since $k = n/\lambda$, we have
\begin{align*}
d(w,o) &\leq \frac{4}{n} \cdot \sum_{g \in G} \sum_{i \in g} d(i,o) \\
&\leq \frac{4}{n} \cdot \sum_{g \in G} \lambda \cdot \max_{i \in g} d(i,o) \\
&= \frac{4}{k} \sum_{g \in G} d(i_g^*,o) \\
&= 4 \cdot \cost(o)
\end{align*}
Using this, \eqref{eq:avg-of-max:symmetric:cost-w-to-bound} becomes
\begin{align*}
    \cost(w) \leq 5 \cdot \cost(o),
\end{align*}
giving us the desired bound of $5$ on the distortion. 
\end{proof}

For general instances with asymmetric groups, we show a tight bound of $2k+1$.

\begin{theorem} \label{thm:ordinal:avg-of-max:lower}
For Avg-of-Max, the distortion of any ordinal-information group-oblivious mechanism is at least $2k+1$, even when there are only two alternatives.
\end{theorem}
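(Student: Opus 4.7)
The plan is to extend the full-information lower bound construction of Theorem~\ref{thm:full:avg-of-max:lower} by (i) using the classic ordinal-distortion trick of placing the $a$-preferring agents right next to the midpoint of $a$ and $b$, and (ii) adding one ``far'' $b$-preferring agent that boosts $\cost(a)$ by an extra additive unit without affecting the maximum distance to $b$ in its group.

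First, I would fix an arbitrary deterministic ordinal-information group-oblivious mechanism $M$ and consider the profile with $n=2k$ agents in which exactly $k$ rank $a$ first and $k$ rank $b$ first. Since $M$ depends only on the submitted rankings and there are only two alternatives, swapping the roles of $a$ and $b$ shows that we may assume without loss of generality that $M$ outputs $a$ on this profile. Next, I would realize the profile on the line of real numbers by placing $a$ at $0$ and $b$ at $2$, positioning all $k$ $a$-preferring agents at $1-\varepsilon$ (so they barely rank $a$ first), placing one $b$-preferring agent at $3$, and placing the remaining $k-1$ $b$-preferring agents at $2$ (i.e., exactly at $b$).

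I would then partition the agents into $k$ asymmetric groups: Group $1$ consists of all $k$ agents at $1-\varepsilon$ together with the single $b$-preferring agent at $3$, and each of the remaining groups $2,\ldots,k$ consists of exactly one of the $b$-preferring agents at $2$. A direct computation then gives the Avg-of-Max costs: in Group $1$ the maximum distance from $a$ is $3$ (realized by the agent at $3$), while every other group has maximum distance $2$ from $a$, so
\[
\cost(a) \;=\; \tfrac{1}{k}\bigl(3 + 2(k-1)\bigr) \;=\; \tfrac{2k+1}{k}.
\]
For alternative $b$, the maximum distance in Group $1$ is $\max(1+\varepsilon,\,1) = 1+\varepsilon$ (the $a$-preferring agents dominate, since the far $b$-agent is only $1$ away from $b$), while every other group has maximum distance $0$ from $b$, so $\cost(b) = (1+\varepsilon)/k$. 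Hence the distortion is at least $(2k+1)/(1+\varepsilon)$, which tends to $2k+1$ as $\varepsilon \to 0$.

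The main thing to get right is the choice of the single outlier $b$-voter: it must sit far enough from $a$ to add one full unit to the max-from-$a$ in Group $1$ (moving it from $2$ to $3$), yet not beyond the distance already contributed to max-from-$b$ in Group $1$ by the near-midpoint $a$-voters (which is $1+\varepsilon$). This is precisely what upgrades the ``$k$-type'' full-information bound into the ``$2k+1$-type'' ordinal bound; once this placement is identified, the rest is an immediate calculation.
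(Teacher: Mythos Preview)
Your proposal is correct and is essentially the same construction as the paper's: the paper also uses $n=2k$ agents with $k$ preferring each alternative, places $a$ at $0$ and $b$ at $2$, puts the $a$-voters at the midpoint, one $b$-voter at $3$, and the remaining $b$-voters at $2$, with the identical grouping (one big group of size $k+1$ and $k-1$ singletons). The only cosmetic difference is that the paper places the $a$-voters exactly at $1$ (exploiting that ties in the preference model may be broken toward $a$) to obtain the ratio $2k+1$ exactly, whereas you perturb to $1-\varepsilon$ and recover the bound in the limit via the supremum in the definition of distortion.
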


\begin{proof}
Consider the following instance with $n=2k$ agents and two alternatives located on the line of real numbers:
\begin{itemize}
    \item Alternative $a$ is at $0$ and alternative $b$ is at $2$;
    \item There $k$ agents that prefer alternative $a$ and $k$ agents that prefer alternative $b$.
\end{itemize} 
Since there is no way of distinguish between the two alternative given the preferences of the agents, we may assume that the winner is $a$, without loss of generality.
The agents might be partitioned into the following $k$ groups:
\begin{itemize}
    \item The first group consists of $k+1$ agents that includes those that prefer $a$ who are located at $1$ and one agent that prefers $b$ who is located at $3$; 
    \item Each of the remaining $k-1$ groups consist of just one agent that prefers $b$ who is located at $2$.
\end{itemize}
Hence, $k \cdot \cost(a) = 3 + (k-1)\cdot 2 = 2k+1$ and $k \cdot \cost(b) = 1$, leading to a distortion of $2k+1$. 
\end{proof}

The matching upper bound follows easily by choosing any alternative who is ranked first by some agent.

\begin{theorem} \label{thm:ordinal:avg-of-max:upper}
For Avg-of-Max, the distortion of a mechanism that returns an alternative who is the most-preferred of some agent is at most $2k+1$.
\end{theorem}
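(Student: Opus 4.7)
The plan is to exploit the defining property of the winner $w$: since $w$ is the most-preferred alternative of some agent $i^*$, we have $d(i^*,w)\le d(i^*,o)$ for every alternative $o$, and in particular for the optimal one. This, together with the triangle inequality, yields
\[
d(w,o) \le d(w,i^*) + d(i^*,o) \le 2\, d(i^*,o).
\]
So the whole argument reduces to (i) bounding $d(i^*,o)$ in terms of $\cost(o)$ and (ii) turning a bound on $d(w,o)$ into a bound on $\cost(w)$ via the triangle inequality applied inside each group.

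For step (i), the agent $i^*$ lies in some group $g^*$, so $d(i^*,o)\le \max_{i\in g^*} d(i,o)$. Since Avg-of-Max averages the per-group maxima over all $k$ groups, every individual group maximum is at most $k\cdot \cost(o)$. Hence $d(i^*,o)\le k\cdot\cost(o)$, and combining with the previous display gives $d(w,o)\le 2k\cdot \cost(o)$. This is the place where the factor $k$ enters, and it is what makes the bound grow with the number of groups; this is also the step I expect to be the crux of the argument.

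For step (ii), let $i_g\in\argmax_{i\in g} d(i,w)$ for each $g$. By the triangle inequality,
\begin{align*}
\cost(w) = \frac{1}{k}\sum_{g\in G} d(i_g,w)
&\le \frac{1}{k}\sum_{g\in G}\bigl(d(i_g,o)+d(w,o)\bigr) \\
&\le \frac{1}{k}\sum_{g\in G}\max_{i\in g} d(i,o) + d(w,o) \\
&= \cost(o) + d(w,o).
\end{align*}
Plugging in the bound $d(w,o)\le 2k\cdot \cost(o)$ from step (i) yields $\cost(w)\le (2k+1)\cdot \cost(o)$, as desired. The whole proof is short once one notices that the favorite-alternative property of some agent gives a direct two-step triangle bound on $d(w,o)$, and the only loss comes from relating an individual distance to the averaged Avg-of-Max objective.
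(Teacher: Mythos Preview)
Your proof is correct, and it is actually more streamlined than the paper's. You bound $d(w,o)$ globally once, using the single agent $i^*$ whose favorite is $w$: from $d(i^*,w)\le d(i^*,o)$ and the fact that $d(i^*,o)$ is dominated by the max of $i^*$'s group, which in turn is at most $k\cdot\cost(o)$, you get $d(w,o)\le 2k\cdot\cost(o)$; then a single triangle-inequality step inside every group gives $\cost(w)\le\cost(o)+d(w,o)\le(2k+1)\cost(o)$.

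The paper instead partitions the groups into the set $S$ of groups containing at least one agent who ranks $w$ first and its complement. For groups in $S$ it derives the sharper per-group bound $d(i_g,w)\le 3\,d(i_g^*,o)$ and $d(i_g^*,o)\ge\tfrac12 d(w,o)$, while for groups outside $S$ it uses the same triangle bound you do. Combining via a ratio argument yields distortion at most $3+2\,|G\setminus S|/|S|$, which collapses to $2k+1$ only in the worst case $|S|=1$. So the paper's decomposition carries slightly more instance-specific information (it is tighter whenever several groups contain a supporter of $w$), but your argument is shorter and suffices for the stated theorem.
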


\begin{proof}
For any group $g$, let $i_g$ and $i_g^*$ be agents that are most-distant from the winner $w$ and the optimal alternative $o$, respectively. 
Let $S$ be the set of groups in which there is at least one agent with $w$ as her most-preferred alternative, and observe that $|S| \geq 1$, and thus $|G\setminus S| \leq k-1$. 
We make the following observations:
\begin{itemize}
\item 
For any group $g \in S$, let $j_g$ be an agent who ranks $w$ first. 
By the triangle inequality, for any $g \in S$, we have that
\begin{align*}
    d(i_g,w) \leq d(i_g,o) + d(j_g,o) + d(j_g,w) \leq d(i_g,o) + 2d(j_g,o) \leq 3 \cdot d(i_g^*,o). 
\end{align*}
In addition, since there is agent $j_g$ that prefers $w$ over $o$, then 
\[ \frac12 \cdot d(w,o) \leq \frac12 \big( d(j_g,w) + d(j_g,o) \big) \leq d(j_g,o) \leq d(i_g^*,o). \]

\item
For any group $g \not\in S$, by the triangle inequality, we have that
\begin{align*}
d(i_g,w) \leq d(i_g,o) + d(w,o) \leq d(i_g^*,o) + d(w,o).
\end{align*}
Also $d(i_g^*,o) \geq 0$.

\end{itemize}

Using these, we can now bound the distortion as follows:
\begin{align*}
\frac{\cost(w)}{\cost(o)}
&= \frac{\sum_{g \in S}d(i_g,w) + \sum_{g \not\in S}d(i_g,w)}{\sum_{g \in G} d(i_g^*,o)} \\
&\leq \frac{3\sum_{g \in S}d(i_g^*,o) + \sum_{g \not\in S}\big(d(i_g^*,o) + d(w,o) \big)}{\sum_{g \in G} d(i_g^*,o)} \\
&\leq 3 + \frac{\sum_{g \not\in S} d(w,o) }{\sum_{g \in S} d(i_g^*,o)} \\
&\leq 3 + \frac{|G\setminus S|\cdot d(w,o)}{|S| \cdot \frac{1}{2}\cdot d(w,o)} \\
&\leq 3 + 2(k-1) = 2k+1,
\end{align*}
as desired.
\end{proof}

\section{Group-Aware Mechanisms} \label{sec:aware}

In the previous two sections, we focused on mechanisms that are oblivious to the partition of the agents into groups. It is thus natural for one to wonder whether improved distortion bounds can be achieved by mechanisms that are {\em aware} of the groups. Clearly, we can optimize exactly both objectives if we are also given full information about the locations of the agents and the alternatives in the metric space, so this question makes sense when we only have access to partial information about the metric space, such as ordinal information. In this section, we consider such group-aware mechanisms and show tight bounds on the distortion in two cases: 
(1) there are only two alternatives; 
(2) there are $m \geq 2$ alternatives and the distances between them are known. 

\subsection{The Case of Two Alternatives}
Here, we consider the case of two alternative $a$ and $b$. For both objectives (Max-of-Avg and Avg-of-Max), we show a tight bound of $3$ on the distortion of ordinal-information mechanisms. We start with the lower bounds, which are implied by the classic voting setting without groups. 

\begin{theorem}
For both Max-of-Avg and Avg-of-Max, the distortion of any ordinal-information group-aware mechanism is at least $3$, even when there are only two alternatives and the groups are symmetric. 
\end{theorem}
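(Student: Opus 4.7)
The plan is to reduce to the classical two-alternative metric distortion lower bound by using the simplest possible symmetric group structure: $k=n=2$ singleton groups. In this setting the two objectives collapse to well-studied classical ones, namely
\[
\MoA(x) = \max\{d(1,x),\, d(2,x)\}
\qquad\text{and}\qquad
\AoM(x) = \tfrac{1}{2}\bigl(d(1,x)+d(2,x)\bigr),
\]
so Max-of-Avg is precisely the egalitarian cost and Avg-of-Max is (half of) the social cost. A group-aware ordinal mechanism on such an input has no more information than a classical ordinal mechanism on two alternatives, because with singletons the group labels are in bijection with the agent identities.

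The adversarial instance has agent $1$ preferring $a$ and agent $2$ preferring $b$. The input is invariant, up to relabeling, under the simultaneous swap of the two alternatives, the two agents, and the two groups, so we may assume that the mechanism outputs $a$ without loss of generality. On the real line, place $a$ at $0$, $b$ at $2$, and agent $1$ at $1-\varepsilon$, so that agent $1$ strictly prefers $a$. For Max-of-Avg, place agent $2$ at $3$ (which strictly prefers $b$): then $\MoA(a)=\max\{1-\varepsilon,3\}=3$ and $\MoA(b)=\max\{1+\varepsilon,1\}=1+\varepsilon$, giving distortion $\tfrac{3}{1+\varepsilon}\to 3$ as $\varepsilon\to 0$. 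For Avg-of-Max, place agent $2$ at $2+\delta$ for small $\delta>0$ (still preferring $b$): then $\AoM(a)=\tfrac{1}{2}(3-\varepsilon+\delta)$ and $\AoM(b)=\tfrac{1}{2}(1+\varepsilon+\delta)$, giving distortion $\tfrac{3-\varepsilon+\delta}{1+\varepsilon+\delta}\to 3$ as $\varepsilon,\delta\to 0$. If the mechanism instead outputs $b$ on this input, a mirror-image placement of the two agents across the midpoint $1$ yields the same ratio.

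The main subtlety is that the two objectives require qualitatively different positions for the $b$-preferring agent in order to saturate the ratio $3$: agent $2$ must be placed far from $b$ (at $3$) for Max-of-Avg, so that the maximum distance from $a$ becomes three times the maximum distance from $b$; whereas for Avg-of-Max, agent $2$ must be placed very close to $b$ (at $2+\delta$), so that its contribution to $\AoM(b)$ is nearly zero. In both cases the ordinal profile and the group structure are identical, so the mechanism cannot distinguish the two scenarios and is forced into the same committed choice, establishing the lower bound of $3$ for both objectives simultaneously.
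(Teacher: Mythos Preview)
Your proposal is correct and follows essentially the same approach as the paper: reduce to singleton groups so that Max-of-Avg becomes the egalitarian cost and Avg-of-Max becomes the average social cost, and then invoke the classical two-alternative lower bound of~$3$. The only difference is cosmetic---the paper cites \citet{anshelevich2018approximating,gkatzelis2020resolving,kempe2022veto} for the classical bounds, whereas you spell out the standard two-agent line constructions explicitly (and correctly).
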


\begin{proof}
The lower bounds for both objectives follow by considering instances in which the agents are partitioned into singleton groups. Then, the Max-of-Avg objective reduces to the egalitarian cost (the maximum distance over all agents), while the Avg-of-Max objectives reduces to the average social cost (the average total distance of the agents). When there are no groups (or, equivalently, there are singleton groups), the best possible distortion in terms of the egalitarian or the average social cost is $3$, even where there are only two alternatives~\citep{anshelevich2018approximating,gkatzelis2020resolving,kempe2022veto}.
\end{proof}

Next, we present the tight upper bounds. For the Max-of-Avg objective, we consider the {\sc Group-Proportional-Majority} mechanism which chooses the winner $w$ to be an alternative that has the largest proportional majority within any group. In particular, for any alternative $x\in \{a,b\}$, let $n_g(x)$ be the number of agents in group $g$ that prefer $x$. Then, 
\[w \in \argmax_{x \in \{a,b\}}  \max_{g \in G} \frac{n_g(x)}{n_g}.\]

\begin{theorem}\label{thm:aware:max-of-avg:two:upper}
For Max-of-Avg and two alternatives, the distortion of {\sc Group-Proportional-Majority} is at most $3$. 
\end{theorem}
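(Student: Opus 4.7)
The plan is to show that $\cost(w)\le 3\cost(o)$ by separately controlling two quantities and then exploiting the fact that the same parameter appears, with opposite effects, in both bounds so that it cancels. Let $o$ denote the alternative different from $w$ (the candidate optimum), let $\alpha=\max_{g\in G}\frac{n_g(w)}{n_g}$, and let $g^\ast$ be a group attaining this maximum. By the mechanism's selection criterion, $\max_{g\in G}\frac{n_g(o)}{n_g}\le\alpha$, and in particular $\frac{n_{g_w}(o)}{n_{g_w}}\le\alpha$ for the group $g_w$ that determines $\cost(w)$.

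First I would bound $\cost(w)$ in terms of $\cost(o)$ and $d(w,o)$. Split the agents of $g_w$ into the sets $S_w$ and $S_o$ of those who prefer $w$ and $o$, respectively. For $i\in S_w$, $d(i,w)\le d(i,o)$; for $i\in S_o$, the triangle inequality gives $d(i,w)\le d(i,o)+d(w,o)$. Summing and dividing by $n_{g_w}$ then yields
\begin{align*}
\cost(w)=\frac{1}{n_{g_w}}\sum_{i\in g_w}d(i,w)\le\frac{1}{n_{g_w}}\sum_{i\in g_w}d(i,o)+\frac{n_{g_w}(o)}{n_{g_w}}\,d(w,o)\le\cost(o)+\alpha\cdot d(w,o).
\end{align*}

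Next I would bound $d(w,o)$ in terms of $\cost(o)$, using the group $g^\ast$. In $g^\ast$ at least $\alpha\cdot n_{g^\ast}$ agents prefer $w$ to $o$; for any such agent $i$ the inequalities $d(i,w)\le d(i,o)$ and $d(w,o)\le d(i,w)+d(i,o)$ combine to give $d(i,o)\ge d(w,o)/2$. Averaging over the whole group,
\begin{align*}
\cost(o)\ge\frac{1}{n_{g^\ast}}\sum_{i\in g^\ast}d(i,o)\ge\alpha\cdot\frac{d(w,o)}{2},
\end{align*}
so $d(w,o)\le \frac{2\cost(o)}{\alpha}$. Plugging this into the previous bound, $\cost(w)\le\cost(o)+\alpha\cdot\frac{2\cost(o)}{\alpha}=3\cost(o)$, which is the claimed distortion of $3$.

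The main obstacle is making the factor $\alpha$ appear identically in both inequalities. A naive use of $\frac{n_{g_w}(o)}{n_{g_w}}\le 1$ in the first step, together with $\alpha\ge 1/2$ in the second, would give only a bound of $5$. The crucial observation is that the mechanism's own selection rule forces $\frac{n_{g_w}(o)}{n_{g_w}}\le\alpha$, and this tightening is exactly what is needed for $\alpha$ to cancel between the two estimates. The only edge case is $\alpha=0$, which cannot arise because then the alternative $o$ would be the unique maximizer and would have been selected; and if $d(w,o)=0$ the statement is trivial.
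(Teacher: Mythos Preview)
Your proof is correct and follows essentially the same approach as the paper's: both split $g_w$ into supporters of $w$ and of $o$ to get $\cost(w)\le\cost(o)+\frac{n_{g_w}(o)}{n_{g_w}}\,d(w,o)$, and both use the group $\gamma$ (your $g^\ast$) realizing the maximal $w$-fraction to obtain $\cost(o)\ge\frac{n_\gamma(w)}{n_\gamma}\cdot\frac{d(w,o)}{2}$, then combine via the mechanism's guarantee $\frac{n_\gamma(w)}{n_\gamma}\ge\frac{n_{g_w}(o)}{n_{g_w}}$. The only cosmetic difference is that you name this common quantity $\alpha$ and cancel it explicitly, whereas the paper chains the inequality directly.
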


\begin{proof}
For any group $g$, let $S_g(x)$ be the set of agents in $g$ that prefer $x$; thus, $n_g(x) = |S_g(x)|$.
By the definition of the mechanism, there is a group $\gamma$ such that $\frac{n_{\gamma}(w)}{n_{\gamma}} \geq \frac{n_g(o)}{n_g}$ for every group $g$. Clearly, for any agent $i \in S_{\gamma}(w)$, $d(i,w) \leq d(i,o)$, and thus, by the triangle inequality, $d(i,o) \geq d(w,o)/2$. Using this, for any group $g$, we can bound the optimal cost as follows:
\[\cost(o) \geq \frac{1}{n_{\gamma}} \sum_{i \in \gamma} d(i,o) 
\geq \frac{1}{n_{\gamma}} \sum_{i \in S_{\gamma}(w)} d(i,o) 
\geq \frac{n_{\gamma}(w)}{n_\gamma} \cdot \frac{d(w,o)}{2} 
\geq \frac{n_{g}(o)}{n_g} \cdot \frac{d(w,o)}{2}\]
or, equivalently, 
\begin{equation}
\frac{n_{g}(o)}{n_g} \cdot d(w,o) \leq 2 \cdot \cost(o). \label{eq:aware:max-of-avg:upper:optimal}
\end{equation}
Now, let $g_w$ be the group that determines the cost of $w$. Using the fact that $d(i,w) \leq d(i,o)$ for every agent $i \in S_{g_w}(w)$ and the triangle inequality, we have
\begin{align*}
\cost(w) 
&= \frac{1}{n_{g_w}}\sum_{i \in g_w} d(i,w) \\
&= \frac{1}{n_{g_w}} \sum_{i \in S_{g_w}(w)} d(i,w) + \frac{1}{n_{g_w}} \sum_{i \in S_{g_w}(o)} d(i,w) \\
&\leq \frac{1}{n_{g_w}} \sum_{i \in S_{g_w}(w)} d(i,o) + \frac{1}{n_{g_w}} \sum_{i \in S_{g_w}(o)} \!( d(i,o) + d(w,o) ) \\
&\leq \cost(o) + \frac{n_{g_w}(o)}{n_{g_w}} \cdot d(w,o).
\end{align*}
Using \eqref{eq:aware:max-of-avg:upper:optimal} for $g = g_w$, we finally obtain $\cost(w) \leq 3 \cdot \cost(o)$, as desired.  
\end{proof}

For Avg-of-Max, we consider the {\sc Group-Score} mechanism  which, for any alternative $x \in \{a,b\}$, assigns $2$ points to $x$ for any group in which all agents prefer $x$, and $1$ point for any group in which some agents prefer $x$ while the remaining agents prefer the other alternative. The winner $w$ is the alternative with maximum score, breaking possible ties arbitrarily. 

\begin{theorem} \label{thm:aware:avg-of-max:two:upper}
For Avg-of-Max and two alternatives, the distortion of {\sc Group-Score} is at most $3$. 
\end{theorem}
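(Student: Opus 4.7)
My plan is to classify the groups by how their members split between $a$ and $b$, and then exploit the scoring rule to guarantee enough ``friendly'' groups to pay for the ``hostile'' ones. Let $w$ be the winner and $o$ be the other alternative, and partition $G$ into $G_w$ (groups where every agent prefers $w$), $G_o$ (groups where every agent prefers $o$), and $G_{\text{mix}}$ (the rest). Since $w$ wins, the inequality $2|G_w|+|G_{\text{mix}}| \geq 2|G_o|+|G_{\text{mix}}|$ gives the crucial structural fact $|G_w|\geq |G_o|$. For any group $g$, write $M_g(x)=\max_{i\in g} d(i,x)$; the goal is to bound $\sum_g M_g(w)$ by $3\sum_g M_g(o)$.

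Next I would prove three per-group inequalities. For $g\in G_w$: pick a maximizer $i^\star$ of $d(\cdot,w)$ in $g$; since $i^\star$ prefers $w$, $d(i^\star,w)\leq d(i^\star,o)\leq M_g(o)$, so $M_g(w)\leq M_g(o)$. For $g\in G_{\text{mix}}$: some agent $j\in g$ prefers $w$, so by the triangle inequality $d(w,o)\leq d(j,w)+d(j,o)\leq 2d(j,o)\leq 2 M_g(o)$; another triangle inequality step then gives $M_g(w)\leq M_g(o)+d(w,o)\leq 3 M_g(o)$. For $g\in G_o$ the only immediate bound is $M_g(w)\leq M_g(o)+d(w,o)$, which is useless on its own because $M_g(o)$ could be tiny. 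However, the same computation as in the mixed case shows that for every $g'\in G_w\cup G_{\text{mix}}$ we have $d(w,o)\leq 2M_{g'}(o)$, since every such group contains at least one agent preferring $w$.

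The key combinatorial step is to exploit $|G_w|\geq |G_o|$ to charge the $G_o$ groups to the $G_w$ groups. Fix any injection $\pi\colon G_o\to G_w$. For each $g\in G_o$, $M_g(w)\leq M_g(o)+2M_{\pi(g)}(o)$. Summing these, together with $M_g(w)\leq M_g(o)$ over $G_w$ and $M_g(w)\leq 3M_g(o)$ over $G_{\text{mix}}$, gives
\begin{align*}
\sum_{g\in G} M_g(w)
&\leq \sum_{g\in G_w} M_g(o) + \sum_{g\in G_o} \bigl( M_g(o)+2M_{\pi(g)}(o)\bigr) + 3\sum_{g\in G_{\text{mix}}} M_g(o) \\
&\leq 3\sum_{g\in G_w} M_g(o) + \sum_{g\in G_o} M_g(o) + 3\sum_{g\in G_{\text{mix}}} M_g(o) \\
&\leq 3 \sum_{g\in G} M_g(o),
\end{align*}
where the middle line uses $\sum_{g\in G_o} M_{\pi(g)}(o)\leq \sum_{g\in G_w} M_g(o)$ because $\pi$ is injective. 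Dividing by $k$ gives $\cost(w)\leq 3\cost(o)$.

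The only subtlety to address is degenerate cases. If $G_w=G_o=\emptyset$ the bound is immediate from the mixed case. If $G_o\neq \emptyset$ then $|G_w|\geq |G_o|\geq 1$, so $G_w\neq \emptyset$ and the injection $\pi$ exists; this is the part where the scoring rule is doing real work, whereas plain majority within a single group would not give a large enough ``reservoir'' of $G_w$ groups. I expect the main obstacle to be precisely this charging step: one has to notice that it is the relative \emph{count} of all-$w$ versus all-$o$ groups (not the totals of distances or the sizes of the groups) that Group-Score balances, and that this exactly suffices because the $2$-to-$1$ weighting mirrors the $d(w,o)\leq 2M_{g'}(o)$ slack available in any group containing a $w$-supporter.
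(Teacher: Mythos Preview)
Your proof is correct and follows essentially the same approach as the paper: partition the groups into $G_w$, $G_o$, $G_{\text{mix}}$, derive the same per-group inequalities, and use the scoring rule to extract $|G_w|\geq |G_o|$. The only cosmetic difference is in the final combination step: the paper bounds the ratio directly as $1 + 2\frac{|G_o|+|G_{\text{mix}}|}{|G_w|+|G_{\text{mix}}|}\leq 3$ by lower-bounding the denominator via $M_g(o)\geq \tfrac12 d(w,o)$ for all $g\in G_w\cup G_{\text{mix}}$, whereas you charge each $g\in G_o$ to a distinct $\pi(g)\in G_w$ and handle $G_{\text{mix}}$ with the self-contained bound $M_g(w)\leq 3M_g(o)$; both combinations are equivalent once $|G_w|\geq |G_o|$ is in hand.
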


\begin{proof}
Let $w$ be the alternative chosen by the mechanism, and $o$ an optimal alternative; clearly, if $w=o$, the distortion is $1$, so we assume that $w \neq o$. We partition the groups into three sets: 
\begin{itemize}
    \item $S_w$ contains the groups that are {\em in favor of $w$}, in which all agents prefer $w$ over $o$;
    \item $S_o$ contains the groups that are {\em in favor of $o$}, in which all agents prefer $o$ over $w$;
    \item $S_m$ contains the groups that are {\em mixed}, in which some agents prefer $w$ over $o$ and some agents prefer $o$ over $w$. 
\end{itemize}
For any group $g$, let $i_g$ be a most-distant agent from $w$ and $i_g^*$ a most-distant agent from $o$; hence, $\cost(w) = \frac{1}{k} \sum_g d(i_g,w)$ and $\cost(o) = \frac{1}{k}\sum_g d(i_g^*,o)$. 
We make the following observations:
\begin{itemize}
    \item For any $g \in S_w$, both $i_g$ and $i_g^*$ prefer $w$ over $o$. Hence, $d(i_g,w) \leq d(i_g,o) \leq d(i_g^*,o)$ and, using the triangle inequality, $d(i_g^*,o) \geq \frac12 \cdot d(w,o)$.
    \item For any $g \in S_o$, by the triangle inequality, $d(i_g,w) \leq d(i_g,o) + d(w,o) \leq d(i_g^*,o) + d(w,o)$. Also, recall that $d(i_g^*,o) \geq 0$.
    \item For any $g \in S_m$, like above, $d(i_g,w) \leq d(i_g^*,o) + d(w,o)$. Also, since there is at least one agent that prefers $w$ over $o$, it must be the case that $d(i_g^*,o) \geq \frac12 \cdot d(w,o)$.
\end{itemize}
Using first the upper bounds on the distances from $w$, and then the lower bounds on the distances from $o$, we can write the distortion as follows:
\begin{align*}
\frac{\cost(w)}{\cost(o)} & = \frac{\sum_g d(i_g,w)}{\sum_g d(i_g^*,o)} \\
&\leq \frac{\sum_g d(i_g^*,o) + (|S_o| +|S_m|) \cdot d(w,o) }{\sum_g d(i_g^*,o)} \\
&= 1 + \frac{(|S_o| +|S_m|) \cdot d(w,o) }{\sum_g d(i_g^*,o)} \\
&\leq 1 + \frac{(|S_o| +|S_m|) \cdot d(w,o) }{(|S_w| +|S_m|)\cdot \frac12 \cdot d(w,o)} \\
&= 1 + 2 \cdot \frac{|S_o| +|S_m|}{|S_w| +|S_m|}.
\end{align*}
By the definition of the mechanism, $w$ is chosen as the winner because $2|S_w| + |S_m| \geq 2|S_o| + |S_m|$ or, equivalently, $|S_w| \geq |S_o|$. Using this, the distortion is at most
\begin{align*}
1 + 2 \cdot \frac{|S_o| +|S_m|}{|S_w| +|S_m|} 
\leq 1 + 2 \cdot \frac{|S_o| +|S_m|}{|S_o| +|S_m|} 
= 3,
\end{align*}
as claimed.
\end{proof}

\subsection{Known Distances between Alternatives}
We finally consider the general case of $m\geq 2$ but when slightly more information than just ordinal preferences is available. In particular, besides knowing the ordinal preferences of the agents over the alternatives, we assume that the distances between the alternatives in the metric space are also known. This is a natural assumption in various important applications (such as in facility location problems) and it has thus been examined in previous work on the distortion for different voting settings~\citep{anshelevich2021ordinal,anshelevich2024approvals}. Before we continue, we remark that the lower bound of $3$, and even the lower bounds in the previous sections, still hold for this setting where the distances between the alternatives are known since they have been proven using instances with just two alternatives.

To show a tight bound of $3$ for the two objectives, we consider mechanisms that virtually map each agent $i$ to its most-preferred alternative $\favorite(i)$, and then choose the winner to be an alternative that minimizes the objective under consideration for these most-preferred alternatives. In particular, the winner for the Max-of-Avg objective is
\begin{align*}
    w \in \argmax_{x \in A}  \max_{g \in G} \bigg( \frac{1}{n_g} \sum_{i \in g} d(\favorite(i),x) \bigg),
\end{align*}
while the winner for the Avg-of-Max objective is
\begin{align*}
    w \in \argmax_{x \in A} \bigg( \frac{1}{k} \sum_{g \in G} \max_{i \in g}  d(\favorite(i),x)\bigg).
\end{align*}
We will refer to these two mechanisms as {\sc Virtual-MiniMax-of-Avg} and {\sc Virtual-MiniAvg-of-Max}, respectively. 

\begin{theorem}\label{thm:aware:known:upper}
When the alternative locations are known, the distortion of\, {\sc Virtual-MiniMax-of-Avg} is at most $3$ for Max-of-Avg, and the distortion of\, {\sc Virtual-MiniAvg-of-Max} is at most $3$ for Avg-of-Max. 
\end{theorem}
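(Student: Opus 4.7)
Both mechanisms share the same conceptual backbone: they treat each agent $i$ as if she were located at her favorite alternative $\favorite(i)$, and then optimize the corresponding group-fair objective on this \emph{virtual} instance. The single key inequality connecting virtual distances to actual distances is that $\favorite(i)$ is, by definition, a closest alternative to $i$, so for \emph{every} alternative $x$ (in particular for the true optimum $o$),
\begin{equation*}
d(i,\favorite(i)) \;\leq\; d(i,x), \qquad\text{hence}\qquad d(\favorite(i),o) \;\leq\; d(\favorite(i),i) + d(i,o) \;\leq\; 2\,d(i,o).
\end{equation*}
Everything else is triangle-inequality bookkeeping plus the optimality of $w$ in the virtual instance.

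For \textsc{Virtual-MiniMax-of-Avg}, let $g_w$ be the group realizing $\cost(w)$. Applying the triangle inequality with $\favorite(i)$ as waypoint gives
\[
\cost(w) = \frac{1}{n_{g_w}}\sum_{i\in g_w} d(i,w) \;\leq\; \frac{1}{n_{g_w}}\sum_{i\in g_w} d(i,\favorite(i)) + \frac{1}{n_{g_w}}\sum_{i\in g_w} d(\favorite(i),w),
\]
and the first sum is at most $\cost(o)$ by the key inequality applied with $x=o$. For the second sum, the optimality of $w$ on the virtual instance yields
\[
\frac{1}{n_{g_w}}\sum_{i\in g_w} d(\favorite(i),w) \;\leq\; \max_{g\in G} \frac{1}{n_g}\sum_{i\in g} d(\favorite(i),o),
\]
and on each group, another triangle inequality with $i$ as waypoint combined with $d(\favorite(i),i)\leq d(i,o)$ bounds $\frac{1}{n_g}\sum_{i\in g} d(\favorite(i),o)$ by $\frac{2}{n_g}\sum_{i\in g} d(i,o) \leq 2\cost(o)$. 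Summing the two contributions gives $\cost(w)\leq 3\cost(o)$.

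For \textsc{Virtual-MiniAvg-of-Max}, the same template works, but the Avg-of-Max structure forces us to pull the maximum through the triangle inequality. For each group $g$, the inequality $d(i,w)\leq d(i,\favorite(i))+d(\favorite(i),w)\leq d(i,o)+d(\favorite(i),w)$ holds for every $i\in g$, so
\[
\max_{i\in g} d(i,w) \;\leq\; \max_{i\in g} d(i,o) + \max_{i\in g} d(\favorite(i),w),
\]
using the subadditivity of $\max$. Averaging over groups bounds $\cost(w)$ by $\cost(o)$ plus the virtual Avg-of-Max cost of $w$; the latter is at most the virtual Avg-of-Max cost of $o$ by optimality of $w$, and the pointwise bound $d(\favorite(i),o)\leq 2\,d(i,o)$ (which survives the $\max_{i\in g}$ because it is pointwise) converts this back to at most $2\cost(o)$. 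This again yields $\cost(w)\leq 3\cost(o)$. The main subtlety in the argument is the step that moves an inequality under a $\max_{i\in g}$ in the Avg-of-Max case: it is essential that the bound $d(\favorite(i),o)\leq 2d(i,o)$ holds for \emph{every} $i$ and not merely on average, so that taking the pointwise maximum on both sides is legal; if one tried to reroute the inequality through an agent $i$ that maximizes $d(i,o)$ but not $d(\favorite(i),w)$, the indexing would collapse and the factor of $3$ would be lost.
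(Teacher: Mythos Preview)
Your proof is correct and follows essentially the same approach as the paper: route through $\favorite(i)$ via the triangle inequality, use the virtual optimality of $w$ to replace $w$ by $o$ in the virtual term, and then convert back using $d(\favorite(i),o)\leq 2\,d(i,o)$. The only cosmetic differences are that the paper fixes the maximizer $i_g$ in the Avg-of-Max argument rather than invoking subadditivity of $\max$, and in the Max-of-Avg argument it splits $d(\favorite(i),o)\leq d(i,\favorite(i))+d(i,o)$ and bounds the two maxima separately instead of using your pointwise factor-$2$ bound; both variants lead to the same $3\cdot\cost(o)$.
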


\begin{proof}
We first show the bound for the Max-of-Avg objective. Let $w$ be the alternative chosen by the {\sc Virtual-MiniMax-of-Avg} mechanism, and denote by $o$ an optimal alternative. By definition, $\cost(o) \geq \max_{g \in G}\big( \frac{1}{n_g} \sum_{i \in g} d(i,o) \big)$. 
Let $g_w$ be the group that determines the cost of $w$. 
By the triangle inequality, the fact that $d(i,\favorite(i)) \leq d(i,o)$  for any agent $i$, the definition of $w$ (which minimizes the Max-of-Avg cost of the most-preferred alternatives of all agents), and the fact that the maximum of a set of additive functions is subadditive, we obtain
\begin{align*}
\cost(w) 
&= \frac{1}{n_{g_w}} \sum_{i \in g_w} d(i,w) \\
&\leq \frac{1}{n_{g_w}} \sum_{i \in g_w} d(i,\favorite(i)) + \frac{1}{n_{g_w}} \sum_{i \in g_w} d(\favorite(i), w) \\
&\leq \frac{1}{n_{g_w}} \sum_{i \in g_w} d(i,o) + \max_{g \in G} \bigg( \frac{1}{n_g} \sum_{i \in g} d(\favorite(i), w) \bigg) \\
&\leq \cost(o) + \max_{g \in G} \bigg( \frac{1}{n_g} \sum_{i \in g} d(\favorite(i), o) \bigg)  \\
&\leq \cost(o) + \max_{g \in G} \bigg( \frac{1}{n_g} \sum_{i \in g} d(i,\favorite(i)) \bigg) +  \max_{g \in G} \bigg( \frac{1}{n_g} \sum_{i \in g} d(i,o) \bigg) \\
&\leq 3\cdot \cost(o).
\end{align*}

The proof for the Max-of-Avg objective is quite similar. Now let $w$ be the alternative chosen by the {\sc Virtual-MiniAvg-of-Max} mechanism.
For the optimal alternative $o$, by definition, we have
$\cost(o) \geq \frac{1}{k} \sum_{g \in G} \max_{i \in g} d(i,o)$.
Let $i_g$ be the most-distant agent from $w$ in group $g$. 
Again, using the triangle inequality, the fact that $d(i,\favorite(i)) \leq d(i,o)$ for any agent $i$, the definition of $w$ (which now minimizes the Avg-of-Max cost of the most-preferred alternative of all agents), and the fact that max is a subadditive function, we obtain
\begin{align*}
\cost(w) 
&= \frac{1}{k} \sum_{g \in G} d(i_g,w) \\
&\leq \frac{1}{k} \sum_{g \in G} d(i_g,\favorite(i_g)) + \frac{1}{k} \sum_{g \in G} d(\favorite(i_g), w) \\
&\leq \frac{1}{k} \sum_{g \in G} d(i_g,o) + \frac{1}{k} \sum_{g \in G} \max_{i \in g} d(\favorite(i), w) \\
&\leq \frac{1}{k} \sum_{g \in G} \max_{i \in g} d(i,o)  + \frac{1}{k} \sum_{g \in G} \max_{i \in g} d(\favorite(i), o) \\
&\leq \cost(o) + \frac{1}{k} \sum_{g \in G} \max_{i \in g} \big(d(i,\favorite(i)) + d(i,o)\big)  \\
&\leq \cost(o) + \frac{1}{k} \sum_{g \in G} \max_{i \in g} \big(2 \cdot d(i,o)\big)  \\
&\leq 3\cdot \cost(o),
\end{align*}
as claimed.
\end{proof}

\section{Conclusion and Open problems} \label{sec:conclusion}
In this paper, we considered a metric voting setting in which the agents are partitioned into groups. When the groups are unknown, we showed tight bounds on the distortion of oblivious full-information and oblivious ordinal-information mechanisms in terms of two objectives that take the groups into account, the Max-of-Avg and the Avg-of-Max objectives. On the other hand, when the groups are known, we managed to show tight bounds on the distortion of group-aware ordinal mechanisms when there are just two alternatives or when we also have access to the locations of the alternatives in the metric space. 

There are multiple avenues for further research in the group voting model we considered here. The most important problem that our work leaves open is to resolve the distortion of group-aware ordinal mechanisms for more than two alternatives. While this a very challenging task in general, we remark that achieving constant distortion can be done by using the two-step distributed mechanisms of \citet{AFV22} which are, by definition, group-aware. However, those mechanisms do not fully exploit the structure of the groups, and we therefore expect that better distortion bounds can be achieved by unlocking the full potential of group-aware mechanisms.
Other interesting directions would be to consider randomized mechanisms and other objective functions that take the groups into account, beyond Max-of-Avg and Avg-of-Max.

\bibliographystyle{plainnat}
\bibliography{references}

\begin{thebibliography}{40}
\providecommand{\natexlab}[1]{#1}
\providecommand{\url}[1]{\texttt{#1}}
\expandafter\ifx\csname urlstyle\endcsname\relax
  \providecommand{\doi}[1]{doi: #1}\else
  \providecommand{\doi}{doi: \begingroup \urlstyle{rm}\Url}\fi

\bibitem[Abramowitz and Anshelevich(2018)]{abramowitz2017utilitarians}
Ben Abramowitz and Elliot Anshelevich.
\newblock Utilitarians without utilities: Maximizing social welfare for graph
  problems using only ordinal preferences.
\newblock In \emph{Proceedings of the 32nd {AAAI} Conference on Artificial
  Intelligence ({AAAI})}, pages 894--901, 2018.

\bibitem[Abramowitz et~al.(2019)Abramowitz, Anshelevich, and
  Zhu]{abramowitz2019passion}
Ben Abramowitz, Elliot Anshelevich, and Wennan Zhu.
\newblock Awareness of voter passion greatly improves the distortion of metric
  social choice.
\newblock In \emph{Proceedings of the 15th International Conference Web and
  Internet Economics ({WINE})}, pages 3--16, 2019.

\bibitem[Amanatidis et~al.(2021)Amanatidis, Birmpas, Filos{-}Ratsikas, and
  Voudouris]{amanatidis2021peeking}
Georgios Amanatidis, Georgios Birmpas, Aris Filos{-}Ratsikas, and Alexandros~A.
  Voudouris.
\newblock Peeking behind the ordinal curtain: Improving distortion via cardinal
  queries.
\newblock \emph{Artificial Intelligence}, 296:\penalty0 103488, 2021.

\bibitem[Amanatidis et~al.(2022)Amanatidis, Birmpas, Filos{-}Ratsikas, and
  Voudouris]{amanatidis2022matching}
Georgios Amanatidis, Georgios Birmpas, Aris Filos{-}Ratsikas, and Alexandros~A.
  Voudouris.
\newblock A few queries go a long way: Information-distortion tradeoffs in
  matching.
\newblock \emph{Journal of Artificial Intelligence Research}, 74, 2022.

\bibitem[Amanatidis et~al.(2024)Amanatidis, Birmpas, Filos{-}Ratsikas, and
  Voudouris]{Amanatidis2024dice}
Georgios Amanatidis, Georgios Birmpas, Aris Filos{-}Ratsikas, and Alexandros~A.
  Voudouris.
\newblock Don't roll the dice, ask twice: The two-query distortion of matching
  problems and beyond.
\newblock \emph{{SIAM} Journal on Discrete Mathematics}, 38\penalty0
  (1):\penalty0 1007--1029, 2024.

\bibitem[Anshelevich and Sekar(2016)]{anshelevich2016blind}
Elliot Anshelevich and Shreyas Sekar.
\newblock Blind, greedy, and random: Algorithms for matching and clustering
  using only ordinal information.
\newblock In \emph{Proceedings of the 30th {AAAI} Conference on Artificial
  Intelligence ({AAAI})}, pages 390--396, 2016.

\bibitem[Anshelevich and Zhu(2021)]{anshelevich2021ordinal}
Elliot Anshelevich and Wennan Zhu.
\newblock Ordinal approximation for social choice, matching, and facility
  location problems given candidate positions.
\newblock \emph{ACM Transactions on Economics and Computation}, 9\penalty0
  (2):\penalty0 1--24, 2021.

\bibitem[Anshelevich et~al.(2018)Anshelevich, Bhardwaj, Elkind, Postl, and
  Skowron]{anshelevich2018approximating}
Elliot Anshelevich, Onkar Bhardwaj, Edith Elkind, John Postl, and Piotr
  Skowron.
\newblock Approximating optimal social choice under metric preferences.
\newblock \emph{Artificial Intelligence}, 264:\penalty0 27--51, 2018.

\bibitem[Anshelevich et~al.(2021)Anshelevich, Filos-Ratsikas, Shah, and
  Voudouris]{distortion-survey}
Elliot Anshelevich, Aris Filos-Ratsikas, Nisarg Shah, and Alexandros~A.
  Voudouris.
\newblock Distortion in social choice problems: The first 15 years and beyond.
\newblock In \emph{Proceedings of the 30th International Joint Conference on
  Artificial Intelligence {(IJCAI)}}, pages 4294--4301, 2021.

\bibitem[Anshelevich et~al.(2022)Anshelevich, Filos{-}Ratsikas, and
  Voudouris]{AFV22}
Elliot Anshelevich, Aris Filos{-}Ratsikas, and Alexandros~A. Voudouris.
\newblock The distortion of distributed metric social choice.
\newblock \emph{Artificial Intelligence}, 308:\penalty0 103713, 2022.

\bibitem[Anshelevich et~al.(2024)Anshelevich, Filos{-}Ratsikas, Jerrett, and
  Voudouris]{anshelevich2024approvals}
Elliot Anshelevich, Aris Filos{-}Ratsikas, Christopher Jerrett, and
  Alexandros~A. Voudouris.
\newblock Improved metric distortion via threshold approvals.
\newblock In \emph{Proceedings of the 38th {AAAI} Conference on Artificial
  Intelligence ({AAAI})}, pages 9460--9468, 2024.

\bibitem[Benad{\`{e}} et~al.(2021)Benad{\`{e}}, Nath, Procaccia, and
  Shah]{benade2021participatory}
Gerdus Benad{\`{e}}, Swaprava Nath, Ariel~D. Procaccia, and Nisarg Shah.
\newblock Preference elicitation for participatory budgeting.
\newblock \emph{Management Science}, 67\penalty0 (5):\penalty0 2813--2827,
  2021.

\bibitem[Bhaskar et~al.(2018)Bhaskar, Dani, and Ghosh]{bhaskar2018truthful}
Umang Bhaskar, Varsha Dani, and Abheek Ghosh.
\newblock Truthful and near-optimal mechanisms for welfare maximization in
  multi-winner elections.
\newblock In \emph{Proceedings of the 32nd {AAAI} Conference on Artificial
  Intelligence, {(AAAI)}}, pages 925--932, 2018.

\bibitem[Boutilier et~al.(2015)Boutilier, Caragiannis, Haber, Lu, Procaccia,
  and Sheffet]{boutilier2015optimal}
Craig Boutilier, Ioannis Caragiannis, Simi Haber, Tyler Lu, Ariel~D. Procaccia,
  and Or~Sheffet.
\newblock Optimal social choice functions: A utilitarian view.
\newblock \emph{Artificial Intelligence}, 227:\penalty0 190--213, 2015.

\bibitem[Brandt et~al.(2016)Brandt, Conitzer, Endriss, Lang, and
  Procaccia]{comsoc-book}
Felix Brandt, Vincent Conitzer, Ulle Endriss, J{\'{e}}r{\^{o}}me Lang, and
  Ariel~D. Procaccia, editors.
\newblock \emph{Handbook of Computational Social Choice}.
\newblock Cambridge University Press, 2016.

\bibitem[Burkhardt et~al.(2024)Burkhardt, Caragiannis, Fehrs, Russo,
  Schwiegelshohn, and Shyam]{burkhardt2024low}
Jakob Burkhardt, Ioannis Caragiannis, Karl Fehrs, Matteo Russo, Chris
  Schwiegelshohn, and Sudarshan Shyam.
\newblock Low-distortion clustering with ordinal and limited cardinal
  information.
\newblock In \emph{Proceedings of the 38th {AAAI} Conference on Artificial
  Intelligence ({AAAI})}, pages 9555--9563, 2024.

\bibitem[Caragiannis and Fehrs(2023)]{caragiannis2023impartial}
Ioannis Caragiannis and Karl Fehrs.
\newblock Beyond the worst case: Distortion in impartial culture electorate.
\newblock \emph{CoRR}, abs/2307.07350, 2023.

\bibitem[Caragiannis et~al.(2017)Caragiannis, Nath, Procaccia, and
  Shah]{caragiannis2017subset}
Ioannis Caragiannis, Swaprava Nath, Ariel~D. Procaccia, and Nisarg Shah.
\newblock Subset selection via implicit utilitarian voting.
\newblock \emph{Journal of Artificial Intelligence Research}, 58:\penalty0
  123--152, 2017.

\bibitem[Caragiannis et~al.(2022)Caragiannis, Shah, and Voudouris]{CSV22}
Ioannis Caragiannis, Nisarg Shah, and Alexandros~A. Voudouris.
\newblock The metric distortion of multiwinner voting.
\newblock \emph{Artificial Intelligence}, 313:\penalty0 103802, 2022.

\bibitem[Charikar and Ramakrishnan(2022)]{charikar2022randomized}
Moses Charikar and Prasanna Ramakrishnan.
\newblock Metric distortion bounds for randomized social choice.
\newblock In \emph{Proceedings of the 2022 ACM-SIAM Symposium on Discrete
  Algorithms ({SODA})}, pages 2986--3004, 2022.

\bibitem[Charikar et~al.(2024)Charikar, Ramakrishnan, Wang, and
  Wu]{charikar24breaking}
Moses Charikar, Prasanna Ramakrishnan, Kangning Wang, and Hongxun Wu.
\newblock Breaking the metric voting distortion barrier.
\newblock In \emph{Proceedings of the 35th ACM-SIAM Symposium on Discrete
  Algorithms ({SODA})}, 2024.

\bibitem[Ebadian et~al.(2022)Ebadian, Kahng, Peters, and
  Shah]{ebadian2022optimized}
Soroush Ebadian, Anson Kahng, Dominik Peters, and Nisarg Shah.
\newblock Optimized distortion and proportional fairness in voting.
\newblock In \emph{Proceedings of the 23rd ACM Conference on Economics and
  Computation ({EC})}, pages 523--600, 2022.

\bibitem[Ebadian et~al.(2023{\natexlab{a}})Ebadian, Filos{-}Ratsikas, Latifian,
  and Shah]{ebadian2023explainable}
Soroush Ebadian, Aris Filos{-}Ratsikas, Mohamad Latifian, and Nisarg Shah.
\newblock Explainable and efficient randomized voting rules.
\newblock In \emph{Advances in Neural Information Processing Systems 36: Annual
  Conference on Neural Information Processing Systems ({NeurIPS})},
  2023{\natexlab{a}}.

\bibitem[Ebadian et~al.(2023{\natexlab{b}})Ebadian, Latifian, and
  Shah]{ebadian2023approval}
Soroush Ebadian, Mohamad Latifian, and Nisarg Shah.
\newblock The distortion of approval voting with runoff.
\newblock In \emph{Proceedings of the 2023 International Conference on
  Autonomous Agents and Multiagent Systems ({AAMAS})}, pages 1752--1760,
  2023{\natexlab{b}}.

\bibitem[Filos-Ratsikas et~al.(2014)Filos-Ratsikas, Frederiksen, and
  Zhang]{filos2014RP}
Aris Filos-Ratsikas, S{\o}ren Kristoffer~Stiil Frederiksen, and Jie Zhang.
\newblock {Social welfare in one-sided matchings: Random priority and beyond}.
\newblock In \emph{Proceedings of the 7th Symposium of Algorithmic Game Theory
  ({SAGT})}, pages 1--12, 2014.

\bibitem[Filos{-}Ratsikas et~al.(2024)Filos{-}Ratsikas, Kanellopoulos,
  Voudouris, and Zhang]{filos2024distributedFL}
Aris Filos{-}Ratsikas, Panagiotis Kanellopoulos, Alexandros~A. Voudouris, and
  Rongsen Zhang.
\newblock The distortion of distributed facility location.
\newblock \emph{Artificial Intelligence}, 328:\penalty0 104066, 2024.

\bibitem[Gkatzelis et~al.(2020)Gkatzelis, Halpern, and
  Shah]{gkatzelis2020resolving}
Vasilis Gkatzelis, Daniel Halpern, and Nisarg Shah.
\newblock Resolving the optimal metric distortion conjecture.
\newblock In \emph{Proceedings of the 61st Annual IEEE Symposium on Foundations
  of Computer Science ({FOCS})}, pages 1427--1438, 2020.

\bibitem[Gkatzelis et~al.(2023)Gkatzelis, Latifian, and
  Shah]{gkatzelis2023both}
Vasilis Gkatzelis, Mohamad Latifian, and Nisarg Shah.
\newblock Best of both distortion worlds.
\newblock In \emph{Proceedings of the 24th {ACM} Conference on Economics and
  Computation, ({EC})}, pages 738--758, 2023.

\bibitem[Jaworski and Skowron(2020)]{jaworski2020committees}
Michal Jaworski and Piotr Skowron.
\newblock Evaluating committees for representative democracies: the distortion
  and beyond.
\newblock In \emph{Proceedings of the 29th International Joint Conference on
  Artificial Intelligence ({IJCAI})}, pages 196--202, 2020.

\bibitem[Kahng et~al.(2023)Kahng, Latifian, and Shah]{kahng2023intesities}
Anson Kahng, Mohamad Latifian, and Nisarg Shah.
\newblock Voting with preference intensities.
\newblock In \emph{Proceedings of the 37th {AAAI} Conference on Artificial
  Intelligence ({AAAI})}, pages 5697--5704, 2023.

\bibitem[Kempe(2020)]{kempe2020communication}
David Kempe.
\newblock Communication, distortion, and randomness in metric voting.
\newblock In \emph{Proceedings of the 24th {AAAI} Conference on Artificial
  Intelligence ({AAAI})}, pages 2087--2094, 2020.

\bibitem[Kizilkaya and Kempe(2022)]{kempe2022veto}
Fatih~Erdem Kizilkaya and David Kempe.
\newblock Plurality veto: {A} simple voting rule achieving optimal metric
  distortion.
\newblock In \emph{Proceedings of the 31st International Joint Conference on
  Artificial Intelligence ({IJCAI})}, pages 349--355, 2022.

\bibitem[Latifian and Voudouris(2024)]{latifian2024approval}
Mohamad Latifian and Alexandros~A. Voudouris.
\newblock The distortion of threshold approval matching.
\newblock \emph{CoRR}, abs/2401.09858, 2024.

\bibitem[Ma et~al.(2021)Ma, Menon, and Larson]{ma2021matching}
Thomas Ma, Vijay Menon, and Kate Larson.
\newblock Improving welfare in one-sided matchings using simple threshold
  queries.
\newblock In \emph{Proceedings of the 30th International Joint Conference on
  Artificial Intelligence {(IJCAI)}}, pages 321--327, 2021.

\bibitem[Mandal et~al.(2019)Mandal, Procaccia, Shah, and
  Woodruff]{mandal2019efficient}
Debmalya Mandal, Ariel~D. Procaccia, Nisarg Shah, and David~P. Woodruff.
\newblock Efficient and thrifty voting by any means necessary.
\newblock In \emph{Proceedings of the 33rd Conference on Neural Information
  Processing Systems ({NeurIPS})}, pages 7178--7189, 2019.

\bibitem[Mandal et~al.(2020)Mandal, Shah, and Woodruff]{mandal2020optimal}
Debmalya Mandal, Nisarg Shah, and David~P Woodruff.
\newblock Optimal communication-distortion tradeoff in voting.
\newblock In \emph{Proceedings of the 21st ACM Conference on Economics and
  Computation ({EC})}, pages 795--813, 2020.

\bibitem[Procaccia and Rosenschein(2006)]{procaccia2006distortion}
Ariel~D. Procaccia and Jeffrey~S. Rosenschein.
\newblock The distortion of cardinal preferences in voting.
\newblock In \emph{International Workshop on Cooperative Information Agents
  ({CIA})}, pages 317--331, 2006.

\bibitem[Voudouris(2023)]{voudouris2023tight}
Alexandros~A. Voudouris.
\newblock Tight distortion bounds for distributed metric voting on a line.
\newblock \emph{Operations Research Letters}, 51\penalty0 (3):\penalty0
  266--269, 2023.

\bibitem[Zhou et~al.(2022)Zhou, Li, and Chan]{zhou2022sp-group}
Houyu Zhou, Minming Li, and Hau Chan.
\newblock Strategyproof mechanisms for group-fair facility location problems.
\newblock In \emph{Proceedings of the 31st International Joint Conference on
  Artificial Intelligence ({IJCAI})}, pages 613--619, 2022.

\bibitem[Zhou et~al.(2024)Zhou, Chan, and Li]{zhou2024altruism}
Houyu Zhou, Hau Chan, and Minming Li.
\newblock Altruism in facility location problems.
\newblock In \emph{Proceedings of the 38th {AAAI} Conference on Artificial
  Intelligence ({AAAI})}, pages 9993--10001, 2024.

\end{thebibliography}

\newpage
\appendix

\section{Refined Analysis of the Full-Information Group-Oblivious \\ Mechanism for Max-of-Avg} \label{app:full:max-of-avg}
In Theorem \ref{thm:full:max-of-avg:upper:general}, we showed that choosing any alternative that minimizes the total distance from all agents achieved the best possible distortion of $3$ in terms of the Max-of-Avg objective when taking the worst-case over all possible instances.  In this appendix, we present a more detailed analysis of this mechanism and show that the distortion bound is $3-2\mu/n$, where $\mu$ is the smallest group size and $n$ is the number of agents. While this is still $3$ in the worst-case, it implies some improved upper bounds for cases in which the number of groups $k$ is small or the smallest group size $\mu$ is rather large compared to $n$. In particular, for instances where there are $k$ symmetric groups, the bound becomes $3-2/k$. 

\begin{theorem} \label{thm:full:max-of-avg:upper:improved-analysis}
For Max-of-Avg, the distortion of any alternative that minimizes the total distance from all agents is at most $3-2\mu/n$, where $\mu$ is the smallest group size and $n$ is the number of agents.
\end{theorem}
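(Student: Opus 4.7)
The plan is to sharpen the proof of Theorem~\ref{thm:full:max-of-avg:upper:general} by moving from a single-group argument to a weighted-average argument over all groups; the $2\mu/n$ saving will come out of the fact that the ``worst'' group $g_w$ contributes nothing to the surplus over $n\cdot\cost(o)$. I will use the same setup as in that proof: let $w$ be the chosen alternative (minimising the total distance), $o$ an optimal alternative, and for every group $g$ define
\[
\alpha_g \;=\; \tfrac{1}{n_g}\sum_{i \in g} d(i,o),\qquad \beta_g \;=\; \tfrac{1}{n_g}\sum_{i \in g} d(i,w),
\]
so that $\cost(w) = \beta_{g_w}$ for some cost-determining group $g_w$ and $\cost(o) \geq \alpha_g$ for every $g$.

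The key intermediate step I plan to establish is the \emph{uniform} gap inequality
\[
\beta_{g_w} \;\leq\; \beta_g + 2\cost(o) \qquad \text{for every group } g.
\]
This follows from two triangle-inequality averages: $\beta_{g_w} \leq \alpha_{g_w} + d(w,o)$ (by averaging $d(i,w) \leq d(i,o) + d(w,o)$ over $g_w$, exactly as in the original proof), and $d(w,o) \leq \alpha_g + \beta_g$ (by averaging $d(w,o) \leq d(i,w) + d(i,o)$ over the group $g$); combining them and using $\alpha_{g_w},\alpha_g \leq \cost(o)$ yields the bound. The original proof uses this type of step only for a single distinguished group $\gamma$, and it is precisely the shortcut I need to avoid in order to extract the improvement.

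To conclude, I will expand
\[
n\cdot \cost(w) \;=\; \sum_{g\in G} n_g \beta_{g_w} \;=\; \sum_{g\in G} n_g \beta_g \;+\; \sum_{g\in G} n_g(\beta_{g_w} - \beta_g).
\]
The first sum equals $\sum_{i\in N} d(i,w)$, so by the total-distance minimality of $w$ it is at most $\sum_{i\in N} d(i,o) = \sum_{g} n_g \alpha_g \leq n\cdot\cost(o)$. In the second sum, the $g_w$-term vanishes and every other term is at most $2n_g \cost(o)$ by the gap inequality, giving a total of at most $2(n - n_{g_w})\cost(o) \leq 2(n - \mu)\cost(o)$, since $n_{g_w} \geq \mu$ by definition of $\mu$. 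Dividing through by $n$ gives the claimed bound $\cost(w) \leq (3 - 2\mu/n)\cost(o)$; specialising to symmetric groups, where $\mu = n/k$, recovers the $3 - 2/k$ rate mentioned in the paragraph preceding the theorem. The main thing to spot is the uniformity of the gap bound over all $g$: once that is in hand, the total-distance minimality of $w$ can be applied as a single weighted-average inequality across every group at once, and the $\mu/n$ saving comes essentially for free from the zero contribution of~$g_w$ itself.
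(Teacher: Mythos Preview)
Your proof is correct and takes a genuinely different, more elementary route than the paper's own argument. The paper normalises $d(w,o)=2$, proves a reduction lemma (transforming the instance twice so that in the worst case the group $g_w$ and the remaining agents have the same average distance to $o$), and then closes with a case analysis on an auxiliary parameter $x$. Your approach bypasses all of that by proving the uniform ``gap'' inequality $\beta_{g_w}\le\beta_g+2\cost(o)$ for \emph{every} group and then taking a single $n_g$-weighted sum over~$g$; the minimality of $w$ handles the $\sum_g n_g\beta_g$ part, and the vanishing of the $g_w$ term in the gap sum yields the $2n_{g_w}/n\ge 2\mu/n$ saving directly. Both proofs actually deliver the sharper intermediate bound $3-2n_{g_w}/n$ before relaxing to $3-2\mu/n$, so nothing is lost. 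What your approach buys is brevity and robustness (no instance-modification lemma, no case split); what the paper's approach buys is an explicit description of the extremal configuration, which can be useful if one wants to check tightness of the refined bound for specific group sizes.
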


\begin{proof}
Let $w$ be the alternative that minimizes the total distance from all agents. 
Suppose that $d(w,o) = 2$ without loss of generality. 
Let $g_w$ be the group that determines the cost of $w$ (that is, $g_w$ is the group for which the average distance of the agents therein from $w$ is maximized). 
We define $z_w$ and $z_o$ to be the average distance of all agents that do not belong to $g_w$ from $w$ and $o$, respectively, that is
\[z_w := \frac{1}{n-n_{g_w}} \sum_{g \neq g_w} \sum_{i \in g} d(i,w)\] 
and 
\[z_\OPT := \frac{1}{n-n_{g_w}}\sum_{g \neq g_w} \sum_{i \in g} d(i,\OPT).\]
Also, let $y = \frac{1}{n_{g_w}}\sum_{i \in g_w} d(i,\OPT)$ and $\gamma = \cost(w) = \frac{1}{n_{g_w}}\sum_{i \in g_w} d(i,w)$.
Using this notation, we have
\[\sum_{i} d(i,w) = n_{g_w}\cdot \gamma + (n - n_{g_w})\cdot z_w\] 
and
\[\sum_{i} d(i,\OPT) = n_{g_w} \cdot y + (n - n_{g_w})\cdot z_\OPT.\] 

By applying the triangle inequality, and using our assumption that $d(w,o)=2$, we obtain
\begin{align*}
\gamma \leq \frac{1}{n_{g_w}}\sum_{i \in g_w} ( d(i,o) + d(o, w) ) = 2+y.
\end{align*} 
We also have that $y \leq \gamma$; otherwise, since $\gamma = \cost(w)$ and $y \leq \cost(o)$, the distortion would be $1$. This implies that 
$\sum_{i \in g_w} d(i, w) \geq \sum_{i \in g_w} d(i, \OPT)$. Also, by the definition of $w$ (which minimizes the total distance from all agents), $\sum_i d(i, w) \leq \sum_i d(i, \OPT)$. By these, we can conclude that $z_w \leq z_\OPT$; otherwise the total distance from $o$ would be strictly smaller than $w$, and $w$ would not be the winner. So, there must exist $C, x \geq 0$ such that $z_w = C - x$ and $z_\OPT = C + x$. 
In fact, using the triangle inequality,
\begin{align*}
    2=d(w,o) = \frac{1}{n-n_{g_w}}\sum_{g \neq g_w} \sum_{i \in g} d(w,o) \leq z_w + z_\OPT = 2C
\end{align*}
and thus $C \geq 1$.

    \begin{figure}[t]
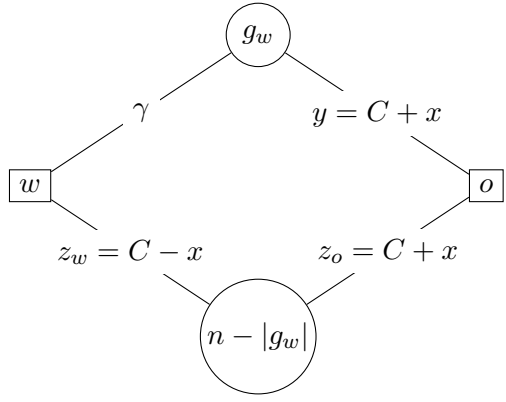

        \centering
        \tikz {
          \node (gw) [circle, draw] at (0,2) {$g_w$};
          \node (s) [circle, draw, inner sep=1.5pt] at (0,-2)  {$n-\abs{g_w}$};
          
          \node (opt) [rectangle, draw] at (-3,0) {$w$};
          \node (w) [rectangle, draw]  at (3,0) {$\OPT$};
          
          \draw (gw) edge node [midway, fill=white] {$y=C+x$} (w) ;
          \draw (gw) edge node [midway, fill=white] {$\gamma$} (opt) ;
          \draw (s) edge node [midway, fill=white] {$z_o = C+x$} (w) ;
          \draw (s) edge node [midway, fill=white] {$z_w = C-x$} (opt) ;
        }
        \caption{An illustration of the worst-case instance given by Lemma \ref{lem:reduction_lemma}.}
        \label{fig:reduction_figure_1}
    \end{figure}
    
We now introduce a lemma that characterizes the worst-case instances we need to focus on. 

\begin{lemma}\label{lem:reduction_lemma}
In a worst-case instance (in terms of distortion), $y = z_\OPT = C+x$.
\end{lemma}

\begin{proof}
The proof proceeds in two steps: 
We first transform the original instance $I$ into a new instance $I'$ in which there are only a few distinct points in the metric space where agents and alternatives are located. 
Then, we transform $I'$ into another instance $I''$ with the desired property $y = C+x$. 
While doing these transformations we will show that the cost of $\OPT$ does not increase, that is, 
\[\cost(\OPT\,|\,I) \geq \cost(\OPT\,|\,I') \geq \cost(\OPT\,|\,I''),\] 
while the cost of $w$ does not decrease, that is, 
\[\cost(w\,|\,I) \leq \cost(w\,|\,I') \leq \cost(w\,|\,I'').\]

We now introduce the first transformation. 
Consider the following new instance $I'$ with two alternatives $w$ and $o$:
\begin{itemize}
    \item There is a group consisting of $n_{g_w}$ agents all of whom are located at the same point with distance $\gamma$ from $w$ and $y$ from $\OPT$.
    \item There are also $k-1$ groups with sizes equal to the sizes of the remaining groups in the original instance $I$. 
    The $n-n_{g_w}$ agents in all those groups are located at the same point with distance $z_w = C-x$ from $w$ and $z_\OPT=C+x$ from $\OPT$. 
\end{itemize}  
Observe that $w$ still minimizes the total distance of all agents in $I'$; this follows since the total distance of the agents from $w$ and $o$ remain the same as in $I$. For the same reason, $\cost(w\,|\,I') = \cost(w\,|\,I)$. 

We now argue that $\cost(\OPT\,|\,I') \leq \cost(\OPT\,|\,I)$. 
By the definition of $y$, the average distance of the agents in $g_w$ from $\OPT$ is the same as in $I$. 
Consider any group $g_o \neq g_w$ that maximizes $\frac{1}{n_{g_o}}\sum_{i \in g_o} d(i,\OPT)$. 
Since $g_o$ maximizes the average distance out of all groups that are different than $g_w$, we have that
\[
\frac{1}{n_{g_o}}\sum_{i \in g_o} d(i,\OPT) \geq \frac{1}{n-n_{g_w}}\sum_{g\not=g_w}\sum_{i \in g}d(i,\OPT) = z_\OPT = C + x.
\] 
Observe now that 
\[
\cost(\OPT\,|\,I) = \max \bigg(\frac{1}{n_{g_o}}\sum_{i \in g_o} d(i,\OPT), \frac{1}{n_{g_w}}\sum_{i \in g_w}d(i,\OPT)\bigg) \geq \max \left\{C+x, y\right\} = \cost(\OPT\,|\,I'), 
\] 
thus proving our claim. 

Next, we transform $I'$ into a new instance $I''$ with the desired property $C+x = y$ such that the distortion does not decrease. 
We consider the following two cases.

\medskip
\noindent 
{\bf Case 1: $y \geq C+x$ in $I'$.} 
Let $C' = y - x \geq C$ and consider the following instance $I''$:
\begin{itemize}
    \item There is a group consisting of $n_{g_w}$ agents all of whom are located at the same point with distance $\gamma$ from $w$ and $y$ from $\OPT$.
    \item There are also $k-1$ groups with sizes equal to the sizes of the remaining groups in the original instance $I$. 
    The $n-n_{g_w}$ agents in all those groups are located at the same point with distance $C'-x$ from $w$ and $C'+x$ from $\OPT$. 
\end{itemize}  
In this new instance $w$ still minimizes the total distance from all agents; indeed, the location of any agent $i \in g_w$ is the same in both instances $I'$ and $I''$, while any agent $i \not\in g_w$ has been moved closer to $w$ and $o$ by exactly the same distance $C'-C = y - (x+C) \geq 0$ between the two instances. Observe that $\cost(\OPT\,|\,I'') = \cost(\OPT)$ since the average distance of any group from $\OPT$ is $y$ in $I''$ and $\cost(\OPT\,|\,I') = \max\{C+x,y\} = y$ by our assumption for this case. In addition, the distances to $w$ increase for some agents, and thus $\cost(w\,|\,I'') \geq \cost(w\,|\,I')$, which further means that the distortion does not decrease as we go from $I'$ to $I''$ for which the desired property $y = C'+x$ holds.

\medskip
\noindent
{\bf Case 2: $y < C+x$ in $I'$.}
Let $y' = C+x$, $\gamma' = \gamma + (y' - y)$ and consider the following instance $I''$:
\begin{itemize}
    \item There is a group consisting of $n_{g_w}$ agents all of whom are located at the same point with distance $\gamma'$ from $w$ and $y'$ from $\OPT$.
    \item There are also $k-1$ groups with sizes equal to the sizes of the remaining groups in the original instance $I$. 
    The $n-n_{g_w}$ agents in all those groups are located at the same point with distance $C-x$ from $w$ and $C+x$ from $\OPT$. 
\end{itemize} 
As in the previous case, $w$ still minimizes the total distance from all agents; indeed, the any agent $i \in g_w$ has been moved closer to $w$ and $o$ by the same distance $y'-y$ between the two instances, while any agent $i \not\in g_w$ is at the same location in both instances. 
We also have that $\cost(\OPT\,|\,I'')=\cost(\OPT\,|\,I') = C+x$, and $\cost(w\,|\,I'') \geq \cost(w\,|\,I')$ since the distance to $w$ has increased for some agents from $y < C+x$ to $y'=C+x$. Hence, the distortion again does not decrease, and the new instance again satisfies the desired property that $y' = C+x$.
\end{proof}
    
By the above lemma, there exists a worst-case instance (where the ratio $\frac{\cost(w)}{\cost(o)}$ is maximized) with $C+x = y$. 
We consider the following two cases. 

\medskip
\noindent 
{\bf Case 1:  $x\geq \frac{n_{g_w}}{n-n_{g_w}}$}.
Then, since $\cost(w) = \gamma \leq 2+y = 2 + C + x$ and $\cost(\OPT) = \max\{y, C+x\} = C+x$, the distortion is at most 
\begin{align*}
    \frac{\cost(w)}{\cost(o)} = \frac{2 + C + x}{C+x}.
\end{align*}
This expression is a non-increasing function in terms of $C$ and $x$. Hence, since $C \geq 1$ and $x\geq \frac{n_{g_w}}{n-n_{g_w}}$, we obtain an upper bound of 
\[\frac{3 + \frac{n_{g_w}}{n-n_{g_w}}}{1+\frac{n_{g_w}}{n-n_{g_w}}} = 3 - \frac{2n_{g_w}}{n} \leq 3 - \frac{2\mu}{n}.\]

\medskip
\noindent 
{\bf Case 2: $x \leq \frac{n_{g_w}}{n-n_{g_w}}$.}
Since $z_w = C-x$, have that 
\[\sum_i d(i,w) = n_{g_w} \cdot \gamma + (n-n_{g_w})\cdot z_w = n_{g_w} \cdot \gamma + (n-n_{g_w})\cdot (C-x).\]
Also, since $z_o = C+x = y$, we have that
\[\sum_i d(i,o) = n_{g_w} \cdot y + (n-n_{g_w})\cdot z_o = n\cdot(C+x).\]
Using these and the definition of $w$, which is the alternative that minimizes the total distance from all agents, we further have that
\begin{align*}
\sum_i d(i,w) \leq \sum_i d(i,o) \Leftrightarrow \gamma &\leq \frac{2nx}{n_{g_w}} + C - x.
\end{align*}
Hence, the distortion in the worse case instance is at most,
\begin{align*}
\frac{\cost(w)}{\cost(\OPT)} = \frac{\gamma}{\cost(\OPT)} \leq \frac{\frac{2nx}{n_{g_w}} + C - x}{C+x}
\end{align*} 
This expression is a non-decreasing function of $x$ and a non-increasing function of $C$. Since $x \leq \frac{n_{g_w}}{n-n_{g_w}}$ and $C\geq 1$, we obtain an upper bound of 
\[\frac{\frac{2n}{n-n_{g_w}}+1-\frac{n_{g_w}}{n-n_{g_w}}}{1+\frac{n_{g_w}}{n-n_{g_w}}} = 3-\frac{2n_{g_w}}{n} \leq 3-\frac{2\mu}{n}.\]
The proof is now complete.
\end{proof}

\end{document}